\documentclass{IEEEtran}
\def\BibTeX{{\rm B\kern-.05em{\sc i\kern-.025em b}\kern-.08em
    T\kern-.1667em\lower.7ex\hbox{E}\kern-.125emX}}

\usepackage{textcomp}
\usepackage{cite}
\usepackage{amsmath,amssymb,amsfonts}
\usepackage{algorithmic}
\usepackage{graphicx}
\usepackage{multirow}
\usepackage{array}
\usepackage{subcaption}
\usepackage{xcolor}
\usepackage{makecell} 
\usepackage{booktabs}
\usepackage{tabularx}
\usepackage[hidelinks]{hyperref}
\hypersetup{
    citecolor = black,
    }

\newtheorem{remark}{Remark}
\newtheorem{theorem}{Theorem}

\newtheorem{assumption}{Assumption}

\newcommand{\norm}[1]{\left\Vert#1\right\Vert}
\newcommand{\bea}{\begin{eqnarray}}
\newcommand{\eea}{\end{eqnarray}}
\newcommand{\beas}{\begin{eqnarray*}}
	\newcommand{\eeas}{\end{eqnarray*}}

\graphicspath{{./}}

\title{Input-Constrained Active Disturbance Rejection Control for AUVs
}

\author{Minh-Quan Le, Ying Tan, Jinghe Yang, Jason Monty, and Ye Pu%
{\thanks{Authors are with the School of Electrical, Mechanical and Infrastructure Engineering, University of Melbourne, Parkville, VIC 3010, Australia.}%
\thanks{Corresponding author: Minh-Quan Le (e-mail: \texttt{quan.le@student.unimelb.edu.au}).}%
}}

\begin{document}

\maketitle

%%%%%%%%%%%%%%%%%%%%%%%%%%%%%%%%%%%%%%%%%%%%%%%%%%%%%%%%%%%%%%%%%%%%%%%%%%%%%%%%
\begin{abstract}
This paper addresses the control of autonomous underwater vehicles (AUVs) operating in dynamic environments, where wave-induced forces, modelling errors, and actuator limitations make reliable control challenging. Active disturbance rejection control (ADRC), together with an extended state observer (ESO), estimates lumped uncertainties online, simplifying control design using nominal models. However, existing stability analyses typically assume bounded control inputs and derivatives, which are difficult to guarantee under unknown disturbances and measurement noise. To bridge this theory--practice gap, this paper introduces an ADRC framework with active input saturation and post-saturation low-pass filtering. The saturation constrains the control magnitude, while the filter regularizes the saturated signal to ensure bounded input-rate variations. By explicitly enforcing these conditions, the proposed system guarantees semi-global practical asymptotic stability. High-fidelity simulations and experiments on an underwater robotic platform demonstrate robust control performance under severe environmental disturbances and actuator constraints. A supplementary video is available at {\em \url{https://youtu.be/AEFlZ69Yg00}}.
\end{abstract}

\begin{IEEEkeywords}
Active disturbance rejection control, Extended state observer, Autonomous underwater vehicles, Input saturation and rate-limited control.
\end{IEEEkeywords}

% This paper addresses the control of autonomous underwater vehicles (AUVs) operating in dynamic environments. Strong uncertainties, including wave-induced forces, modelling errors, and actuator limitations, make reliable control particularly challenging. Active disturbance rejection control (ADRC), together with an extended state observer (ESO), is widely used to estimate the lumped uncertainties online to simplify the control to the nominal models. However, existing stability analyses typically assume bounded control inputs and derivatives, an assumption hard to be guaranteed in practice under unknown disturbances and measurement noise. To bridge this theory-to-implementation gap, this paper introduces an ADRC framework with active input saturation and post-saturation low-pass filtering. Active saturation constrains the control magnitude, while the low-pass filter regularizes the saturated control signal to ensure bounded input-rate variations. By explicitly enforcing these boundedness conditions, the system theoretically guarantees semi-global practical asymptotic stability. The proposed approach is validated through high-fidelity simulations and experiments on an underwater robotic platform, demonstrating robust control performance under severe environmental disturbances and actuator constraints. (A supplementary video summarizing this work is available at https://youtu.be/AEFlZ69Yg00)

%%%%%%%%%%%%%%%%%%%%%%%%%%%%%%%%%%%%%%%%%%%%%%%%%%%%%%%%%%%%%%%%%%%%%%%%%%%%

\section{Introduction}

Autonomous underwater vehicles (AUVs) are increasingly deployed for ecosystem monitoring, offshore infrastructure inspection, environmental surveying, and search-and-rescue operations \cite{Bashir_Review_2023}. Reliable stabilization is essential because AUVs operate under complex hydrodynamics, ocean currents, waves, and modelling uncertainties \cite{fossen,Bashir_Review_2023}. Practical actuator limitations, including input magnitude and rate constraints and dead-zone nonlinearities, further complicate control design \cite{blue,t200}. Developing robust strategies that reject environmental disturbances while respecting these limitations therefore remains an important research challenge.

Classical proportional--integral--derivative (PID) controllers remain widely used because of their simplicity and ease of implementation, but their performance often deteriorates in nonlinear and disturbance-rich underwater environments \cite{pid2,pid3,pid4,pid5}.
Adaptive methods, including model reference adaptive control (MRAC) and $\mathcal{L}1$ adaptive control, compensate for parametric uncertainties through online adaptation \cite{nullspace,adap1,adap2,adap3}. 
Model predictive control (MPC) explicitly incorporates system constraints through receding-horizon optimization and has demonstrated strong performance for underwater vehicles \cite{mpc1,mpc3,mpc4,mpc5,mpc6}. 
Sliding mode control (SMC) and $H_\infty$ control provide robustness against disturbances and modelling uncertainties \cite{blue, blue2, smc1,smc2,h-inf1,h-inf2,h-inf3}. 
However, adaptive control often requires restrictive uncertainty parameterizations, MPC incurs substantial online computational costs, and SMC may cause undesirable chattering or aggressive control actions \cite{smc1,smc2,mpc1,adap3}. 
More recently, data-driven approaches, including supervised and reinforcement learning, have been investigated to model complex underwater dynamics and improve control performance \cite{learn,learn1,learn2,learn3}. 
Despite their adaptability, establishing rigorous stability and robustness guarantees under severe disturbances and actuator constraints remains challenging.

Active disturbance rejection control (ADRC) offers an attractive alternative by using an extended state observer (ESO) to estimate lumped uncertainties online and compensate for them through the control law \cite{Han2009ADRC,Guo2015ADRCTheory,adrc1,adrc2}. 
Its reduced dependence on accurate models and relatively low computational complexity make it suitable for embedded systems \cite{Han2009ADRC,Guo2015ADRCTheory}. 
ADRC has therefore gained increasing attention in underwater robotics for vehicle stabilization, heading control, depth regulation, and hybrid MPC--ADRC architectures \cite{adrc_linear,adrc_heading,adrc_depth,adrc_mpc}. 
Nevertheless, a significant gap remains between ADRC theory and practical implementation.
Existing stability analyses typically assume bounded lumped uncertainties, control inputs, and their time derivatives \cite{Han2009ADRC,Guo2015ADRCTheory,adrc1}. 
Although these assumptions facilitate ESO convergence and closed-loop stability analysis, they are difficult to guarantee in applications, where environmental disturbances and measurement noise can produce large control commands and rapid input changes.
Thus, satisfying the input-related assumptions of existing ADRC theory remains challenging for actuator-limited underwater vehicles.

To bridge this theory--practice gap, this paper develops an input-constrained active disturbance rejection control (IC-ADRC) framework combining active input saturation and post-saturation low-pass filtering.
The saturation bounds the control magnitude according to actuator capabilities, while the filter limits input-rate variations. 
Singular perturbation analysis establishes the required time-scale separation among the plant, filter, and ESO dynamics, leading to a semi-global practical stability guarantee.

The main contributions are:
\begin{itemize}
\item An IC-ADRC framework that combines active input saturation and post-saturation filtering to enforce bounded and smooth control inputs for AUVs.

\item A singular perturbation-based design methodology that establishes the required time-scale separation among the plant, filter, and ESO dynamics.

\item A theoretical proof of semi-global practical stability for the resulting closed-loop system.

\end{itemize}

The proposed framework is validated through high-fidelity simulations and experiments on an underwater robotic platform under environmental disturbances and actuator constraints.

%%%%%%%%%%%%%%%%%%%%%%%%%%%%%%%%%%%%%%%%%%%%%%%%%%%%%%
\section{Preliminaries and Problem Formulation}

\label{sec:pre}

This section presents the notation, the dynamic modeling of AUVs, and the problem formulation addressed in this work.

Let $\mathbb{R}$ and $\mathbb{R}^n$ denote the sets of real numbers and $n$-dimensional Euclidean space, respectively. 
For a vector $\boldsymbol{x} \in \mathbb{R}^n$, its Euclidean norm is defined as $\norm{\boldsymbol{x}} = \sqrt{\boldsymbol{x}^\top \boldsymbol{x}}$.

The entrywise saturation $\mathrm{sat}(\boldsymbol{u}, \boldsymbol{u}_{\max})$ for $\boldsymbol{u} \in \mathbb{R}^n$ and positive bound vector $\boldsymbol{u}_{\max} \in \mathbb{R}^n_{>0}$ is defined for each component $i = 1, \dots, n$ by:
\begin{equation} \label{eqn:vector_saturation_def}
    [\mathrm{sat}(\boldsymbol{u}, \boldsymbol{u}_{\max})]^i =  \begin{cases} \operatorname{sign}(u^i) u^i_{\max}, & \vert{}u^i\vert{} > u^i_{\max}, \\ u^i, & \vert{}u^i\vert{} \le u^i_{\max}, \end{cases}   
\end{equation}

\subsection{Vehicle Kinematics and Dynamics}
In this work, we consider the widely adopted second-order, six-degree-of-freedom (6-DOF) dynamic model of a fully actuated autonomous underwater vehicle (AUV), expressed in the North--East--Down (NED) inertial reference frame, as presented in \cite{fossen}. 

\begin{figure}[!htb]
    \centering
    \includegraphics[width=0.7\linewidth]{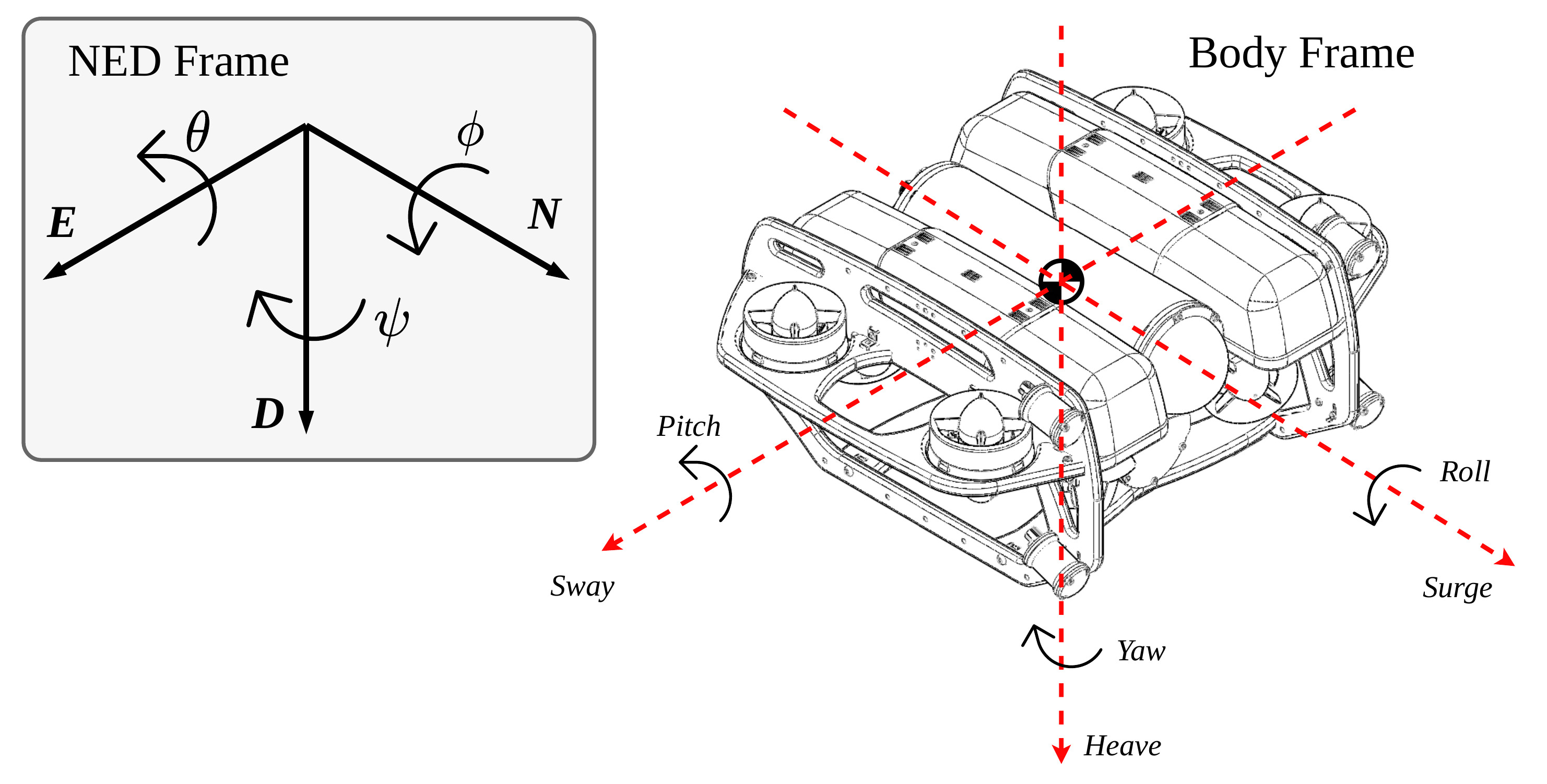}
    \caption{NED inertial and body-fixed frames of an AUV \cite{fossen}.}
    \label{fig:frames}
\end{figure}

The pose of the vehicle in the NED frame is denoted by
\[
\boldsymbol{\eta}
=
[x \ y \ z \ \phi \ \theta \ \psi]^\top \in \mathbb{R}^6 ,
\]
where $(x,y,z)$ are the position coordinates and
$(\phi,\theta,\psi)$ are the roll, pitch, and yaw angles.
To formulate the control problem directly in the task space, we define the generalized velocity of the vehicle strictly in the NED frame as $\boldsymbol{v}
= \dot{\boldsymbol{\eta}} \in \mathbb{R}^6$.

Expressed entirely in the NED frame, the 6-DOF equations of motion for the vehicle are given by:
\begin{align}
\dot{\boldsymbol{\eta}} = {\boldsymbol{v}}, \hspace{2.55in} \label{eqn:kinematics}\\
\boldsymbol{M}(\boldsymbol{\eta})\dot{\boldsymbol{v}}
+
\boldsymbol{C}_{Cor}(\cdot)\boldsymbol{v}
+
\boldsymbol{D}(\cdot)\boldsymbol{v}
+
\boldsymbol{g}(\boldsymbol{\eta})
=
\boldsymbol{\tau}
+
\boldsymbol{\tau}_{\mathrm{wave}},
\label{eqn:dynamics}
\end{align}
where
$\boldsymbol{M}(\cdot) \in \mathbb{R}^{6\times6}$ is the symmetric and positive definite inertia matrix. 
The notion of 
$\boldsymbol{C}_{Cor}(\boldsymbol{v},\boldsymbol{\eta})$ is the Coriolis–centripetal matrix,
$\boldsymbol{D}(\boldsymbol{v},\boldsymbol{\eta})$ models hydrodynamic damping,
$\boldsymbol{g}(\boldsymbol{\eta})$ represents hydrostatic forces,
$\boldsymbol{\tau}$ is the control input (thruster forces and moments),
and
$\boldsymbol{\tau}_{\mathrm{wave}}$ denotes environmental disturbances,
dominated here by wave-induced loads. 
Here, we denote $\boldsymbol{\tau} \in \mathbb{R}^6$ as the generalized control input expressed in the NED frame. 
Because the  thrusters are rigidly attached to the vehicle, allocating this input requires transforming it into the body-fixed frame via:
\begin{equation}
    \boldsymbol{\tau}^b = \boldsymbol{J}^\top(\boldsymbol{\eta}) \boldsymbol{\tau},
    \label{eqn:tau_body_transform}
\end{equation}
where $\boldsymbol{J}(\boldsymbol{\eta})$ is the standard 6-DOF transformation matrix mapping the body and NEW frames \cite{fossen}.
The system measures the vehicle pose as output
\begin{equation}\label{eqn:measurement}
\boldsymbol{y} = \boldsymbol{\eta}.
\end{equation}

\subsection{Problem Formulation}

To facilitate controller design, the nonlinear system \eqref{eqn:kinematics}, \eqref{eqn:dynamics} is linearized around the equilibrium point 
$(\boldsymbol{\eta},\boldsymbol{v}) = \boldsymbol{0}_{12}$. 
All unmodelled dynamics, nonlinearities, linearization errors, and external disturbances are grouped into a lumped uncertainty term 
$\boldsymbol{\sigma} \in \mathbb{R}^6$, which can be expressed as
\begin{equation}
\boldsymbol{\sigma} := \boldsymbol{\sigma}(\boldsymbol{\eta}, \boldsymbol{v}, \boldsymbol{\tau}, \boldsymbol{\tau}_{\mathrm{wave}}).
\label{eqn_lumped_uncertainties}
\end{equation}
The model functions $\boldsymbol{M}$,
$\boldsymbol{C}_{\mathrm{Cor}}$, $\boldsymbol{D}$, and $\boldsymbol{g}$
are continuously differentiable on their entire domains, and
$\boldsymbol{M}(\boldsymbol{\eta})$ is nonsingular for all
$\boldsymbol{\eta}\in\mathbb{R}^{6}$. Consequently, the lumped uncertainty
$\boldsymbol{\sigma}$ is continuously differentiable with respect to its
arguments.
Defining the state vector
\begin{equation}
    \boldsymbol{x}
    =
    \begin{bmatrix}
        \boldsymbol{\eta}^{\top} &
        \boldsymbol{v}^{\top}
    \end{bmatrix}^{\top}
    \in \mathbb{R}^{12}.
    \label{eqn:regulation_state}
\end{equation}
With $\boldsymbol{M}_n=\boldsymbol{M}(\boldsymbol{0})$, it has the following linearized dynamics:
\begin{align}\label{eqn:linear_system}
\dot{\boldsymbol{x}} &= \boldsymbol{A}_n \boldsymbol{x} + \boldsymbol{B}_n \left(\boldsymbol{\tau} + \boldsymbol{\sigma}\right),\\
\boldsymbol{y} &= \boldsymbol{C}_n \boldsymbol{x},
\end{align}
where 
\begin{equation}
\boldsymbol{A}_n = \left[\begin{smallmatrix} \boldsymbol{0}_{6\times6} & \boldsymbol{I}_{6} \\ \boldsymbol{0}_{6\times6} & \boldsymbol{0}_{6\times6} \end{smallmatrix}\right], ~
\boldsymbol{B}_n = \left[\begin{smallmatrix} \boldsymbol{0}_{6\times6}\\ \boldsymbol{M}_n^{-1} \end{smallmatrix}\right], ~
\boldsymbol{C}_n = \left[\begin{smallmatrix} \boldsymbol{I}_{6} \\ \boldsymbol{0}_{6\times6} \end{smallmatrix}\right]^\top.
\end{equation}
% \begin{equation}
% \boldsymbol{A}_n
% =
% \begin{bmatrix}
% \boldsymbol{0}_{6\times6} & \boldsymbol{I}_{6} \\
% \boldsymbol{0}_{6\times6} & \boldsymbol{0}_{6\times6}
% \end{bmatrix},
% \boldsymbol{B}_n
% =
% \begin{bmatrix}
% \boldsymbol{0}_{6\times6}\\
% \boldsymbol{M}_n^{-1}
% \end{bmatrix},
% \boldsymbol{C}_n
% =
% \begin{bmatrix}
% \boldsymbol{I}_{6} \\ \boldsymbol{0}_{6\times6}
% \end{bmatrix}^\top.    
% \end{equation}

\begin{assumption}
\label{assumption_boundedness_lumped_uncertainties}
For the system in \eqref{eqn:dynamics}, the environmental disturbance
$\boldsymbol{\tau}_{\mathrm{wave}}(t)$ is continuously differentiable, and both
$\boldsymbol{\tau}_{\mathrm{wave}}(t)$ and its time derivative are uniformly
bounded. That is, there exist constants $\bar{\tau}_{\mathrm{wave}}>0$ and
$\bar{d}_{\mathrm{wave}}>0$ such that, for all $t\geq 0$,
\begin{equation}
\left\|\boldsymbol{\tau}_{\mathrm{wave}}(t)\right\|
\leq \bar{\tau}_{\mathrm{wave}},
\qquad
\left\|\dot{\boldsymbol{\tau}}_{\mathrm{wave}}(t)\right\|
\leq \bar{d}_{\mathrm{wave}}.
\end{equation}
\end{assumption}
\begin{remark}
\label{remark:boundedness}
The model functions $\boldsymbol{M}$,
$\boldsymbol{C}_{\mathrm{Cor}}$, $\boldsymbol{D}$, and $\boldsymbol{g}$
are continuously differentiable on their entire domains, and
$\boldsymbol{M}(\boldsymbol{\eta})$ is nonsingular for all
$\boldsymbol{\eta}\in\mathbb{R}^{6}$. Consequently, the lumped uncertainty
\[
\boldsymbol{\sigma}
=
\boldsymbol{\sigma}
\bigl(
\boldsymbol{x},
\boldsymbol{\tau},
\boldsymbol{\tau}_{\mathrm{wave}}
\bigr)
\]
is continuously differentiable with respect to its arguments. Therefore, for any positive quintuple
$
(\Delta,r,d_r,\bar{\tau}_{\mathrm{wave}},\bar d_{\mathrm{wave}})
$, 
there exist positive constants
\[
L_{\sigma}
=
L_{\sigma}(\Delta,r,\bar{\tau}_{\mathrm{wave}})
\]
and
\[
L_{\dot{\sigma}}
=
L_{\dot{\sigma}}
(\Delta,r,d_r,\bar{\tau}_{\mathrm{wave}},\bar d_{\mathrm{wave}})
\]
such that
\begin{equation}
\|\boldsymbol{\sigma}\|\leq L_{\sigma},
\qquad
\|\dot{\boldsymbol{\sigma}}\|\leq L_{\dot{\sigma}},
\end{equation}
whenever $\|\boldsymbol{x}\|\leq\Delta$, $\|\boldsymbol{\tau}\|\leq r$, $\|\dot{\boldsymbol{\tau}}\|\leq d_r$, $\|\boldsymbol{\tau}_{\mathrm{wave}}\|
\leq\bar{\tau}_{\mathrm{wave}}$,  and $\|\dot{\boldsymbol{\tau}}_{\mathrm{wave}}\|
\leq\bar d_{\mathrm{wave}}$. \hfill $\circ$
\end{remark}

The control input is subject to practical actuator constraints on both
its magnitude and rate of change. 
Specifically,
\begin{equation}
\label{eqn:sat_par}
|\tau^i(t)| \le \tau_{\max}^i, 
~ 
|\dot{\tau}^i(t)|
\le
{\tau}^i_{\max,d},
~
i\in\{x,y,z,\phi,\theta,\psi\},    
\end{equation}
where $\tau_{\max}^i$ and ${\tau}^i_{\max,d}$ denote the maximum actuator capability and the maximum achievable input rate in the $i$th degree of freedom, respectively.

\begin{remark}
Lyapunov analysis\cite{nonlinear} subsequently shows that the proposed constrained control
law renders the closed-loop state semi-globally practically bounded. \hfill $\circ$
\end{remark}

The objective is to design an input-constrained ADRC scheme, consisting of an Extended State Observer (ESO), a state-feedback control law, and a low-pass filter, such that:
\begin{enumerate}
    \item the closed-loop system is stabilized at the origin in the presence of $\boldsymbol{\sigma}$;
    \item the control input satisfies the actuator constraints;
    \item the control input derivative remains bounded;
    \item the closed-loop system is semi-globally practically asymptotically stable.
\end{enumerate}

To ensure closed-loop well-posedness, a saturation map and a low-pass
filter enforce the input-magnitude and input-rate constraints,
respectively. The feedback gains are chosen such that saturation remains
inactive for all initial conditions in a prescribed compact set. The
resulting fast-observer, intermediate-filter, and slow-plant dynamics
enable a singular perturbation stability analysis.

%%%%%%%%%%%%%%%%%%%%%%%%%%%%%%%%%%%%%%%%%%%%%%%%%%%%%%%%%%%%%%%%%%%%%%%%%%%%%%%%%%%%%%%%%%%%

\section{Method}
\label{sec:method}

This section presents the proposed input-constrained active disturbance
rejection control (IC-ADRC) framework, shown in
Fig.~\ref{fig:framework}. An extended state observer (ESO) estimates the states
and lumped uncertainty, which is compensated by the ADRC law. An active
saturation map and a low-pass filter enforce the input-magnitude and
input-rate constraints, respectively. The closed-loop system is then
reformulated to reveal the time-scale separation required for singular
perturbation analysis.
\begin{figure*}
    \centering
    \includegraphics[width=0.7\linewidth]{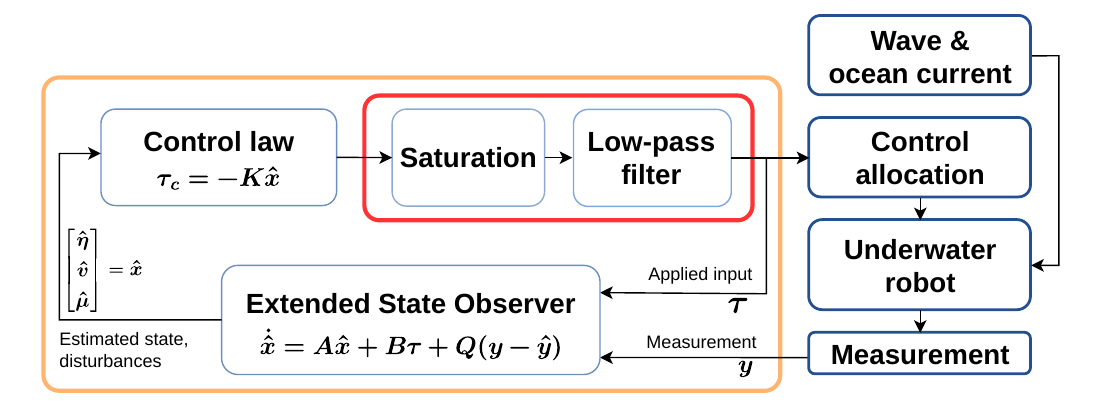}
    \caption{Input-constrained ADRC framework}
    \label{fig:framework}
\end{figure*}

\subsection{ESO Design}
\label{subsec:eso_design}

To estimate $\boldsymbol{\sigma}$, we define the state $\boldsymbol{\mu}$ and derivative $\boldsymbol{\xi}$ as
\begin{equation}
    \boldsymbol{\mu} := \boldsymbol{M}_n^{-1}\boldsymbol{\sigma}, 
    \qquad 
    \boldsymbol{\xi} := \dot{\boldsymbol{\mu}} = \boldsymbol{M}_n^{-1}\dot{\boldsymbol{\sigma}}.
    \label{eqn:mu_def}
\end{equation}
Since $\boldsymbol{M}_n$ is constant and invertible, boundedness of
$\boldsymbol{\sigma}$ and $\dot{\boldsymbol{\sigma}}$ implies boundedness
of $\boldsymbol{\mu}$ and $\boldsymbol{\xi}$. Define
$\boldsymbol{z}=[\boldsymbol{\eta}^{\top},\boldsymbol{v}^{\top},
\boldsymbol{\mu}^{\top}]^{\top}\in\mathbb{R}^{18}$, yielding
\begin{equation}
    \dot{\boldsymbol{z}}
    =
    \boldsymbol{A}\boldsymbol{z}
    +
    \boldsymbol{B}\boldsymbol{\tau}
    +
    \boldsymbol{E} \boldsymbol{\xi},
    \qquad
    \boldsymbol{y}
    =
    \boldsymbol{C}\boldsymbol{z},
    \label{eqn:extended_system}
\end{equation}
where
\begin{equation}
\begin{aligned}
\boldsymbol{A}
&=
\begin{bmatrix}
\boldsymbol{0}_{6\times6} & \boldsymbol{I}_{6} & \boldsymbol{0}_{6\times6}\\
\boldsymbol{0}_{6\times6} & \boldsymbol{0}_{6\times6} & \boldsymbol{I}_{6}\\
\boldsymbol{0}_{6\times6} & \boldsymbol{0}_{6\times6} & \boldsymbol{0}_{6\times6}
\end{bmatrix},
&
\boldsymbol{B}
&=
\begin{bmatrix}
\boldsymbol{0}_{6\times6}\\
\boldsymbol{M}_n^{-1}\\
\boldsymbol{0}_{6\times6}
\end{bmatrix},
\\
\boldsymbol{C}
&=
\begin{bmatrix}
\boldsymbol{I}_{6} &
\boldsymbol{0}_{6\times6} &
\boldsymbol{0}_{6\times6}
\end{bmatrix},
&
\boldsymbol{E}
&=
\begin{bmatrix}
\boldsymbol{0}_{6\times6}\\
\boldsymbol{0}_{6\times6}\\
\boldsymbol{I}_{6}
\end{bmatrix}.
\end{aligned}
\label{eqn:ABC_matrices}
\end{equation}

Denote the estimated state as $\hat{\boldsymbol{z}} = [\hat{\boldsymbol{\eta}}^\top,~ \hat{\boldsymbol{v}}^\top,~ \hat{\boldsymbol{\mu}}^\top]\in \mathbb{R}^{18}$, the Extended State Observer (ESO) is given by
\begin{equation}
    \dot{\hat{\boldsymbol{z}}}
    =
    \boldsymbol{A}\hat{\boldsymbol{z}}
    +
    \boldsymbol{B}\boldsymbol{\tau}
    +
    \boldsymbol{Q}
    \left(
        \boldsymbol{y}
        -
        \boldsymbol{C}\hat{\boldsymbol{z}}
    \right),
    \label{eqn:observer}
\end{equation}
where the observer gains $\boldsymbol{Q}\in\mathbb{R}^{18\times6}$ are parameterized by a prescribed observer bandwidth $\lambda>0$:
\begin{equation}
    \boldsymbol{Q}
    =
    \begin{bmatrix}
        3\lambda\boldsymbol{I}_6 &
        3\lambda^2\boldsymbol{I}_6 &
        \lambda^3\boldsymbol{I}_6
    \end{bmatrix}^\top.
    \label{eqn:observer_gain}
\end{equation}

\subsection{ADRC Design}
\label{subsec:adrc_design}

Using the ESO estimate $\hat{\boldsymbol{z}}$, we design the commanded input
\begin{equation}
    \boldsymbol{\tau}_c
    =
    -
    \boldsymbol{K}
    \hat{\boldsymbol{z}},
    \label{eqn:tau_c}
\end{equation}
where the control gains
\begin{equation}
    \boldsymbol{K}
    =
    \begin{bmatrix}
        \boldsymbol{K}_P &
        \boldsymbol{K}_D &
        \boldsymbol{M}_n
    \end{bmatrix}
    \in \mathbb{R}^{6\times18}.
    \label{eqn:control_gain}
\end{equation}
The control gains are parameterized by a control bandwidth $\kappa>0$:
\begin{equation}
\label{eqn:kappa}
    \boldsymbol{K}_P
    =
    \kappa^2\boldsymbol{M}_n,
    \qquad
    \boldsymbol{K}_D
    =
    2\kappa\boldsymbol{M}_n.
\end{equation}

The commanded input is saturated using \eqref{eqn:vector_saturation_def} with the limits defined in \eqref{eqn:sat_par},
\begin{equation}
    \boldsymbol{\tau}_s
    =
    \mathrm{sat}
    \left(
    \boldsymbol{\tau}_c,
    \boldsymbol{\tau}_{\max}
    \right).
    \label{eqn:sat}
\end{equation}
In the compact region considered in
Section~\ref{subsec:closed_loop_timescale}, the commanded input remains
strictly within the actuator limits; hence, the saturation is inactive,
\begin{equation}
    \boldsymbol{\tau}_s(t)
    =
    \boldsymbol{\tau}_c(t),
    \qquad
    t\ge t_0.
    \label{eqn:unsat_condition}
\end{equation}
The role of the saturation map is therefore to guarantee actuator feasibility during practical implementation,
while the theoretical analysis characterizes the nominal closed-loop behaviour within the prescribed operating region.

\subsection{Low-pass Filter}
\label{subsec:lpf}

The saturated input $\boldsymbol{\tau}_s$ is smoothed by a first-order
low-pass filter with cutoff frequency $\omega>0$:
\begin{equation}
    \dot{\boldsymbol{\tau}}_f
    =
    \omega
    \left(
        \boldsymbol{\tau}_s
        -
        \boldsymbol{\tau}_f
    \right),
    \label{eqn:filter}
\end{equation}
where $\boldsymbol{\tau}_f$ denotes the filtered control input.
The filtered input $\boldsymbol{\tau}=\boldsymbol{\tau}_f$ is applied to
the vehicle. We define the filter error as
\begin{equation}
    \tilde{\boldsymbol{\tau}}_f
    =
    \boldsymbol{\tau}_f
    -
    \boldsymbol{\tau}_s .
    \label{eqn:filter_error}
\end{equation}

\subsection{Closed-loop Error Dynamics}
\label{subsec:closed_loop_timescale}

This subsection formulates the coupled plant, filter-error, and ESO-error
dynamics and establishes their three-time-scale separation for singular
perturbation stability analysis.
\subsubsection{ESO estimation error}
Define the ESO estimation error
$\tilde{\boldsymbol{z}}=\boldsymbol{z}-\hat{\boldsymbol{z}}$, whose dynamics are obtained by subtracting \eqref{eqn:extended_system} from \eqref{eqn:observer}.
For stability analysis, we define the scaled error
\begin{equation}
    \boldsymbol{\zeta}
    =
    \boldsymbol{D}_{\lambda}
    \tilde{\boldsymbol z},
    \qquad
    \boldsymbol{D}_{\lambda}
    =
    \operatorname{diag}
    \left(
        \lambda^2\boldsymbol I_6,
        \lambda\boldsymbol I_6,
        \boldsymbol I_6
    \right).
    \label{eqn:scaled_eso_error}
\end{equation}
Hence, the dynamics of $\boldsymbol{\zeta}$ can be expressed as
\begin{equation}
    \dot{\boldsymbol{\zeta}}
    =
    \lambda\boldsymbol H\boldsymbol{\zeta}
    +
    \boldsymbol E\boldsymbol{\xi},
    \label{eqn:scaled_eso_dyn}
\end{equation}
where the scaled estimation-error matrix $\boldsymbol H$ is given by
\begin{equation}
    \boldsymbol H
    =
    \begin{bmatrix}
        -3\boldsymbol I_6 & \boldsymbol I_6 & \boldsymbol 0_{6\times6}\\
        -3\boldsymbol I_6 & \boldsymbol 0_{6\times6} & \boldsymbol I_6\\
        -\boldsymbol I_6  & \boldsymbol 0_{6\times6} & \boldsymbol 0_{6\times6}
    \end{bmatrix}.
    \label{eqn:H_E}
\end{equation}

\subsubsection{Plant regulation dynamics}
Since the augmented state $\boldsymbol{z}$ includes the time-varying
disturbance $\boldsymbol{\mu}$, stability is analyzed using the regulation
state $\boldsymbol{x}$ defined in
\eqref{eqn:regulation_state}--\eqref{eqn:linear_system}. 

Expressing \eqref{eqn:tau_c} in terms of $\tilde{\boldsymbol{z}}$ yields
\begin{equation}
\begin{aligned}
    \boldsymbol{\tau}_c
    &=
    -\boldsymbol{K}_P\boldsymbol{\eta}
    -
    \boldsymbol{K}_D\boldsymbol{v}
    -
    \boldsymbol{M}_n\boldsymbol{\mu}
    +
    \boldsymbol{K}\tilde{\boldsymbol{z}}.
    \label{eqn:tau_c_expanded}
\end{aligned}
\end{equation}
Substituting \eqref{eqn:mu_def}, \eqref{eqn:unsat_condition}, \eqref{eqn:filter_error}, and \eqref{eqn:tau_c_expanded} into \eqref{eqn:linear_system} leads to
\begin{equation}
    \boldsymbol{M}_n\dot{\boldsymbol{v}}
    =
    -
    \boldsymbol{K}_P\boldsymbol{\eta}
    -
    \boldsymbol{K}_D\boldsymbol{v}
    +
    \boldsymbol{K}\tilde{\boldsymbol{z}}
    +
    \tilde{\boldsymbol{\tau}}_f .
    \label{eqn:plant_error_preliminary}
\end{equation}

We define the modified filter error as
\begin{equation}
    \bar{\boldsymbol{\tau}}
    :=
    \tilde{\boldsymbol{\tau}}_f
    +
    \boldsymbol{K}\tilde{\boldsymbol{z}}.
    \label{eqn:tau_bar_def}
\end{equation}
Consequently, the plant regulation dynamics become
\begin{equation}
    \dot{\boldsymbol{x}}
    =
    \boldsymbol{A}_c\boldsymbol{x}
    +
    \boldsymbol{B}_c\bar{\boldsymbol{\tau}},
    \label{eqn:control_dyn}
\end{equation}
where
\begin{equation}
    \boldsymbol{A}_c
    =
    \begin{bmatrix}
        \boldsymbol{0}_{6\times6} &
        \boldsymbol{I}_6
        \\
        -\boldsymbol{M}_n^{-1}\boldsymbol{K}_P &
        -\boldsymbol{M}_n^{-1}\boldsymbol{K}_D
    \end{bmatrix},
    \quad
    \boldsymbol{B}_c
    =
    \begin{bmatrix}
        \boldsymbol{0}_{6\times6}
        \\
        \boldsymbol{M}_n^{-1}
    \end{bmatrix}.
    \label{eqn:Ac_Bc}
\end{equation}

\subsubsection{Modified filter error}
Differentiating the modified filter error \eqref{eqn:tau_bar_def} gives
\begin{equation}
    \dot{\bar{\boldsymbol{\tau}}}
    =
    \dot{\tilde{\boldsymbol{\tau}}}_f
    +
    \boldsymbol K\dot{\tilde{\boldsymbol z}}.
    \label{eqn:tau_bar_dot_1}
\end{equation}
Within the unsaturated region, differentiating
\eqref{eqn:tau_c} and \eqref{eqn:filter_error} and using
\eqref{eqn:filter} and \eqref{eqn:unsat_condition} yield
\begin{equation}
    \dot{\tilde{\boldsymbol{\tau}}}_f
    =
    -\omega\tilde{\boldsymbol{\tau}}_f
    +
    \boldsymbol K\dot{\boldsymbol z}
    -
    \boldsymbol K\dot{\tilde{\boldsymbol z}}.
    \label{eqn:tau_f_tilde_dyn}
\end{equation}
By substituting \eqref{eqn:tau_f_tilde_dyn} and
\eqref{eqn:tau_bar_def} into \eqref{eqn:tau_bar_dot_1}, we have
\begin{equation}
    \dot{\bar{\boldsymbol{\tau}}}
    =
    -\omega\bar{\boldsymbol{\tau}}
    +
    \omega\boldsymbol K\tilde{\boldsymbol z}
    +
    \boldsymbol K\dot{\boldsymbol z}.
    \label{eqn:tau_bar_dyn}
\end{equation}

For compactness, we define
\begin{equation}
    \boldsymbol K_x
    =
    \begin{bmatrix}
        \boldsymbol K_P & \boldsymbol K_D
    \end{bmatrix}.
    \label{eqn:Kx_def}
\end{equation}
Using the plant regulation dynamics
\eqref{eqn:control_dyn}, it follows that
\begin{equation}
    \boldsymbol K\dot{\boldsymbol z}
    =
    \boldsymbol K_x\boldsymbol A_c\boldsymbol x
    +
    \boldsymbol K_x\boldsymbol B_c\bar{\boldsymbol\tau}
    +
    \boldsymbol M_n\boldsymbol\xi.
    \label{eqn:K_z_dot}
\end{equation}
Thus, \eqref{eqn:tau_bar_dyn} becomes
\begin{equation}
\begin{aligned}
    \dot{\bar{\boldsymbol{\tau}}}
    ={}&
    -\omega\bar{\boldsymbol{\tau}}
    +
    \omega\boldsymbol K\tilde{\boldsymbol z}
    +\boldsymbol K_x\boldsymbol A_c\boldsymbol x
    +
    \boldsymbol K_x\boldsymbol B_c\bar{\boldsymbol\tau}
    +
    \boldsymbol M_n\boldsymbol\xi .
\end{aligned}
\label{eqn:tau_bar_dyn_expanded}
\end{equation}

\subsubsection{Closed-loop dynamics}

The closed-loop error dynamics comprise
\eqref{eqn:control_dyn}, \eqref{eqn:tau_bar_dyn_expanded}, and
\eqref{eqn:scaled_eso_dyn}, with time scales governed by $\kappa$,
$\omega$, and $\lambda$, respectively.
We define the filter and observer time-scale parameters
\begin{equation}
    \epsilon_2
    =
    \frac{1}{\omega},
    \qquad
    \epsilon_1
    =
    \frac{\omega}{\lambda},
    \qquad
    \epsilon_1\epsilon_2
    =
    \frac{1}{\lambda}.
    \label{eqn:epsilon_def}
\end{equation}
Hence, the coupled error dynamics can be expressed as
\begin{equation}
\label{eqn:total_error}
\begin{aligned}
    \dot{\boldsymbol x}
    ={}&
    \boldsymbol A_c\boldsymbol x
    +
    \boldsymbol B_c\bar{\boldsymbol\tau},
    \\[1mm]
    \epsilon_2\dot{\bar{\boldsymbol\tau}}
    ={}&
    -\bar{\boldsymbol\tau}
    +
    \boldsymbol K\boldsymbol D_\lambda^{-1}
    \boldsymbol\zeta
    \\
    &+
    \epsilon_2
    \left(
        \boldsymbol K_x\boldsymbol A_c\boldsymbol x
        +
        \boldsymbol K_x\boldsymbol B_c\bar{\boldsymbol\tau}
        +
        \boldsymbol M_n\boldsymbol\xi
    \right),
    \\[1mm]
    \epsilon_1\epsilon_2\dot{\boldsymbol\zeta}
    ={}&
    \boldsymbol H\boldsymbol\zeta
    +
    \epsilon_1\epsilon_2
    \boldsymbol E\boldsymbol\xi .
\end{aligned}
\end{equation}

We define the closed-loop error state:
\begin{equation}
    \boldsymbol\chi
    =
    \begin{bmatrix}
        \boldsymbol x^\top &
        \bar{\boldsymbol\tau}^{\top} &
        \boldsymbol\zeta^\top
    \end{bmatrix}^{\top}.
    \label{eqn:xi_def}
\end{equation}

\begin{theorem}
\label{Theorem_Main}
Consider the closed-loop error system \eqref{eqn:total_error}
constructed from the AUV dynamics
\eqref{eqn:kinematics}--\eqref{eqn:dynamics}, the ESO
\eqref{eqn:observer}, the ADRC law \eqref{eqn:tau_c}, and the
low-pass filter \eqref{eqn:filter}. Suppose that
Assumption~\ref{assumption_boundedness_lumped_uncertainties} holds and
that, on the prescribed compact set, the disturbance-compensation term
has a strict actuator-authority margin; that is, there exists $\delta>0$
such that
\[
\left|
[\boldsymbol{M}_n\hat{\boldsymbol{\mu}}(t)]_i
\right|
\leq \tau_{\max}^{i}-\delta,
\qquad t\geq t_0.
\qquad
i\in\{x,y,z,\phi,\theta,\psi\}.
\]
For any prescribed compact set of initial conditions
\[
\|\boldsymbol{\chi}(t_0)\|\leq\Delta
\]
and any prescribed practical accuracy $\nu>0$, there exists
$\kappa^*>0$ such that, for every $\kappa\in(0,\kappa^*)$, the gains
$\boldsymbol{K}_P$ and $\boldsymbol{K}_D$ defined in
\eqref{eqn:kappa} render $\boldsymbol{A}_c$ Hurwitz and ensure that the
commanded input remains within the actuator limits on the prescribed
compact set. Furthermore, for every such $\kappa$, there exist
$\epsilon_2^*(\kappa)>0$ and
$\epsilon_1^*(\kappa,\epsilon_2)>0$ such that, for all
\[
0<\epsilon_2<\epsilon_2^*(\kappa),
\qquad
0<\epsilon_1<\epsilon_1^*(\kappa,\epsilon_2),
\]
the solutions satisfy
\begin{equation}
\|\boldsymbol{\chi}(t)\|
\leq
\beta\!\left(
\|\boldsymbol{\chi}(t_0)\|,
t-t_0
\right)
+\nu,
\qquad t\geq t_0,
\label{eqn:sgpas_bound}
\end{equation}
for some $\beta\in\mathcal{KL}$.
\end{theorem}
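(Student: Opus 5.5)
We will prove the theorem with a composite Lyapunov function that follows the three-time-scale structure of \eqref{eqn:total_error}, together with a trapping-region argument. The trapping region keeps the trajectory in the set where Remark~\ref{remark:boundedness} gives uniform bounds on $\boldsymbol\xi$. The argument also keeps the saturation \eqref{eqn:sat} inactive on that set, so that \eqref{eqn:unsat_condition}, and hence the error model, is valid.

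\textbf{Step 1: the linear blocks.}
With \eqref{eqn:kappa} we have $\boldsymbol A_c=\left[\begin{smallmatrix}\boldsymbol 0&\boldsymbol I\\-\kappa^2\boldsymbol I&-2\kappa\boldsymbol I\end{smallmatrix}\right]$. This matrix has a repeated eigenvalue $-\kappa$, so it is Hurwitz for every $\kappa>0$. Let $\boldsymbol P_c=\boldsymbol P_c^\top>0$ solve $\boldsymbol A_c^\top\boldsymbol P_c+\boldsymbol P_c\boldsymbol A_c=-\boldsymbol I$.

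The characteristic polynomial of $\boldsymbol H$ in \eqref{eqn:H_E} is $(s+1)^3$ per channel, so $\boldsymbol H$ is Hurwitz. Let $\boldsymbol P_o$ solve $\boldsymbol H^\top\boldsymbol P_o+\boldsymbol P_o\boldsymbol H=-\boldsymbol I$.

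\textbf{Step 2: the sublevel set and actuator feasibility.}
Define
\[
W=\boldsymbol x^\top\boldsymbol P_c\boldsymbol x+a\,\bar{\boldsymbol\tau}^\top\bar{\boldsymbol\tau}+b\,\boldsymbol\zeta^\top\boldsymbol P_o\boldsymbol\zeta,
\]
and pick a level $c$ so that $\{W\le c\}$ contains the ball $\|\boldsymbol\chi\|\le\Delta$. On this sublevel set $\|\boldsymbol x\|$ is bounded by some $\Delta'$. We also have $\|\boldsymbol K\tilde{\boldsymbol z}\|=\|\boldsymbol K\boldsymbol D_\lambda^{-1}\boldsymbol\zeta\|$. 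Because $\boldsymbol K\boldsymbol D_\lambda^{-1}=[\kappa^2\lambda^{-2}\boldsymbol M_n,\ 2\kappa\lambda^{-1}\boldsymbol M_n,\ \boldsymbol M_n]$, the velocity and position parts of this term are small.

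By \eqref{eqn:tau_c_expanded}, the commanded input is $-\kappa^2\boldsymbol M_n\boldsymbol\eta-2\kappa\boldsymbol M_n\boldsymbol v-\boldsymbol M_n\hat{\boldsymbol\mu}$. The margin hypothesis bounds the last term by $\tau_{\max}^i-\delta$. The first two terms are $O(\kappa\Delta')$, so choosing $\kappa^*$ with $\kappa^*\|\boldsymbol M_n\|(\kappa^*+2)\Delta'<\delta$ keeps $|\tau_c^i|<\tau^i_{\max}$. The saturation is therefore inactive on the set, and $\|\boldsymbol\tau\|\le r:=\|\boldsymbol\tau_{\max}\|$.

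The filter gives $\|\dot{\boldsymbol\tau}_f\|\le 2\omega r=:d_r$. This is the point where filtering makes the input-rate hypothesis of Remark~\ref{remark:boundedness} hold. The remark then yields $\|\boldsymbol\xi\|\le \|\boldsymbol M_n^{-1}\|L_{\dot\sigma}=:\bar\xi$ on the set, a constant depending on $(\Delta',r,\omega)$.

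One subtlety here is that $\Delta'$ must not depend on $\kappa$, yet $\boldsymbol P_c$ depends on $\kappa$. To handle this we fix $c$ through bounds on $\|\boldsymbol P_c\|$ and $\|\boldsymbol P_c^{-1}\|$ for the chosen $\kappa$ rather than uniformly in $\kappa$, checking the circular dependence carefully.

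\textbf{Step 3: the derivative of $W$.}
Differentiating $W$ along \eqref{eqn:total_error} gives
\[
\dot W\le-\|\boldsymbol x\|^2-\tfrac{a}{\epsilon_2}\|\bar{\boldsymbol\tau}\|^2-\tfrac{b}{\epsilon_1\epsilon_2}\|\boldsymbol\zeta\|^2+(\text{cross terms})+(\text{terms in }\bar\xi).
\]
The cross terms are:
\begin{itemize}
\item $2\boldsymbol x^\top\boldsymbol P_c\boldsymbol B_c\bar{\boldsymbol\tau}$;
\item $\frac{2a}{\epsilon_2}\bar{\boldsymbol\tau}^\top\boldsymbol K\boldsymbol D_\lambda^{-1}\boldsymbol\zeta$, whose coefficient is of order $1/\epsilon_2$ and is therefore dominated by $b/(\epsilon_1\epsilon_2)$ once $\epsilon_1$ is small;
\item $2a\bar{\boldsymbol\tau}^\top\boldsymbol K_x\boldsymbol A_c\boldsymbol x$ and $2a\bar{\boldsymbol\tau}^\top\boldsymbol K_x\boldsymbol B_c\bar{\boldsymbol\tau}$, which are $O(1)$ and dominated once $\epsilon_2$ is small.
\end{itemize}
Choosing $\epsilon_2^*(\kappa)$ first and then $\epsilon_1^*(\kappa,\epsilon_2)$, the standard two-step singular perturbation domination makes the quadratic part negative definite:
\[
\dot W\le-\alpha W+\gamma\bar\xi\sqrt W .
\]
The forcing terms are $2a\bar{\boldsymbol\tau}^\top\boldsymbol M_n\boldsymbol\xi$ and $2b\boldsymbol\zeta^\top\boldsymbol P_o\boldsymbol E\boldsymbol\xi$, so $\gamma$ does not blow up as $\epsilon_i\to0$.

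\textbf{Step 4: invariance and the ultimate bound.}
This step is the main obstacle, because the ultimate bound $\gamma\bar\xi/\alpha$ is not automatically below $\nu$. $\bar\xi$ is fixed by the region, and $\alpha$ is limited by $\kappa$, which we must keep small. To shrink the residual we would exploit the time scales. The $\boldsymbol\xi$ forcing enters $\boldsymbol\zeta$ multiplied by $\epsilon_1\epsilon_2$, so $\boldsymbol\zeta$ is ultimately $O(\epsilon_1\epsilon_2\bar\xi)$. Likewise $\bar{\boldsymbol\tau}$ is forced by $\epsilon_2\boldsymbol M_n\boldsymbol\xi$, giving an ultimate size $O(\epsilon_2\bar\xi+\|\boldsymbol\zeta\|)$. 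The state $\boldsymbol x$ is driven only through $\bar{\boldsymbol\tau}$.

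We therefore weight the Lyapunov function so that each layer's residual is proportional to its own $\epsilon$. This gives an ultimate bound of order $(\epsilon_2+\epsilon_1\epsilon_2)\bar\xi\,\|\boldsymbol P_c\|$. We then reduce $\epsilon_2^*$ and $\epsilon_1^*$ further until this is below both $\nu$ and the level needed for invariance of $\{W\le c\}$.

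A complication is that $\bar\xi$ depends on $\omega=1/\epsilon_2$ through $d_r$. We would address this by noting $\dot{\boldsymbol\tau}_f=-\omega\tilde{\boldsymbol\tau}_f$, and $\tilde{\boldsymbol\tau}_f=\bar{\boldsymbol\tau}-\boldsymbol K\boldsymbol D_\lambda^{-1}\boldsymbol\zeta$ is itself $O(\epsilon_2)$ after the transient. This gives a $\omega$-independent rate bound $d_r$ after a short initial layer, and the initial layer is absorbed in the $\mathcal{KL}$ term.

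Forward invariance of $\{W\le c\}$ then follows because $\dot W<0$ on its boundary. The comparison lemma applied to $\dot W\le-\alpha W+\text{small}$ gives \eqref{eqn:sgpas_bound}, with $\beta$ obtained from the exponential decay of $W$ and the norm-equivalence constants.
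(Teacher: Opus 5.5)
Your route differs from the paper's. The paper only checks that the three layers ($\boldsymbol A_c$, $d\bar{\boldsymbol\tau}/dr=-\bar{\boldsymbol\tau}$, $\boldsymbol H$) are exponentially stable, reads Remark~\ref{remark:boundedness} as giving a fixed bound on $\boldsymbol\xi$, and cites a hierarchical singular perturbation theorem. You instead build $W$ by hand, which makes the order $\kappa\to\epsilon_2\to\epsilon_1$ explicit. With an $\epsilon$-independent $\bar\xi$, your Steps~3--4 would close: Young's inequality against the dissipation $a/\epsilon_2$ and $b/(\epsilon_1\epsilon_2)$ gives $\dot W\le-\alpha W+\mathcal O(\epsilon_2)\bar\xi^2$.

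The genuine gap is that you set $d_r=2\omega r$, so $\bar\xi=\mathcal O(\omega)$, and your repair is circular. In this model $\partial_{\boldsymbol\tau}\boldsymbol\sigma=\boldsymbol M_n\boldsymbol M(\boldsymbol\eta)^{-1}-\boldsymbol I$, so $\boldsymbol M_n\boldsymbol\xi$ contains $\partial_{\boldsymbol\tau}\boldsymbol\sigma\,\dot{\boldsymbol\tau}_f=-\omega\,\partial_{\boldsymbol\tau}\boldsymbol\sigma\,(\bar{\boldsymbol\tau}-\boldsymbol K\boldsymbol D_\lambda^{-1}\boldsymbol\zeta)$. This is state feedback, not an exogenous input. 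In $\dot W$ it contributes $-2a\omega\,\bar{\boldsymbol\tau}^\top\partial_{\boldsymbol\tau}\boldsymbol\sigma\,\bar{\boldsymbol\tau}$, which is of the same order $1/\epsilon_2$ as the filter dissipation. It therefore cannot be dominated by shrinking $\epsilon_2$. Nor can it be treated as forcing: the residual $\epsilon_2\bar\xi^2$ would grow like $1/\epsilon_2$, and invariance of $\{W\le c\}$ during the initial layer is lost. Your claim that $\tilde{\boldsymbol\tau}_f=\mathcal O(\epsilon_2)$ after a transient presupposes exactly the filter-layer estimate this term obstructs.

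The missing step is to absorb the term into the boundary layer, which becomes $d\bar{\boldsymbol\tau}/dr=-\boldsymbol M_n\boldsymbol M(\boldsymbol\eta)^{-1}\bar{\boldsymbol\tau}$, and to use $a\,\bar{\boldsymbol\tau}^\top\boldsymbol M_n^{-1}\bar{\boldsymbol\tau}$ in $W$. Its derivative then contains $-2a\omega\,\bar{\boldsymbol\tau}^\top\boldsymbol M(\boldsymbol\eta)^{-1}\bar{\boldsymbol\tau}$, which is negative definite on the compact set. The remainder of $\boldsymbol\xi$ is $\omega$-independent there, and the new $\mathcal O(\omega)$ couplings with $\boldsymbol\zeta$ are absorbed by taking $\epsilon_1$ small. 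The paper never confronts this. Once you make $d_r$ depend on $\omega$, as you rightly do, its reading of Remark~\ref{remark:boundedness} is no longer available.

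A second step fails as written: the choice of $\kappa^*$ in Step~2. For $\boldsymbol P_c$ solving $\boldsymbol A_c^\top\boldsymbol P_c+\boldsymbol P_c\boldsymbol A_c=-\boldsymbol I$, one finds per channel $\lambda_{\max}(\boldsymbol P_c)\approx 1/(4\kappa^3)$ and $\lambda_{\min}(\boldsymbol P_c)\approx 1/(4\kappa)$ as $\kappa\to 0$. Any $\Delta'$ read off a sublevel set containing the $\Delta$-ball is then at least of order $\Delta/\kappa$. Hence $\kappa(\kappa+2)\norm{\boldsymbol M_n}\Delta'$ tends to roughly $2\norm{\boldsymbol M_n}\Delta$, not zero, and for large $\Delta$ your inequality cannot be met by shrinking $\kappa$. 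Bookkeeping of $\norm{\boldsymbol P_c}$ and $\norm{\boldsymbol P_c^{-1}}$ cannot rescue a bound that passes through $\norm{\boldsymbol x}$.

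Bound the functional $\kappa^2\boldsymbol\eta+2\kappa\boldsymbol v$ directly instead. On $\{\boldsymbol x^\top\boldsymbol P_c\boldsymbol x\le\lambda_{\max}(\boldsymbol P_c)\Delta^2\}$, each channel is at most $\kappa\sqrt{13+5\kappa^2}\,\Delta$. Equivalently, work in the coordinates $(\kappa\boldsymbol\eta,\boldsymbol v)$, where $\boldsymbol A_c$ is $\kappa$ times a fixed Hurwitz matrix and the Lyapunov matrix is uniform in $\kappa$. That $\norm{\boldsymbol x}$ itself may be $\mathcal O(\Delta/\kappa)$ is harmless, since $\bar\xi$ and the $\epsilon$ thresholds may depend on $\kappa$. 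The paper glosses over the same point by treating $c_\eta$ and $c_v$ as $\kappa$-independent.
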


{\bf Proof.} 
The proof proceeds in three steps: verification of the slow regulation
dynamics, analysis of the ESO and filter boundary-layer systems, and
application of a hierarchical singular perturbation result.

\textit{Step 1: Nominal regulation dynamics.}
Using the gain parameterization in \eqref{eqn:kappa}, the regulation
matrix in \eqref{eqn:control_dyn} becomes
\begin{equation}
    \boldsymbol A_c
    =
    \begin{bmatrix}
        \boldsymbol 0_{6\times6} & \boldsymbol I_6\\
        -\kappa^2\boldsymbol I_6 & -2\kappa\boldsymbol I_6
    \end{bmatrix}.
\end{equation}
All eigenvalues of $\boldsymbol A_c$ are assigned to $-\kappa$ in the left-half plane, rendering the matrix strictly Hurwitz for every $\kappa>0$. Consider the nominal reduced system
\begin{equation}
    \dot{\boldsymbol x}
    =
    \boldsymbol A_c\boldsymbol x,
    \label{eqn:slow_reduced_proof}
\end{equation}
the origin of this system is globally exponentially stable.

We next justify the unsaturated operation considered in the analysis.
The commanded input can be written as
\begin{equation}
\begin{aligned}
    \boldsymbol\tau_c
    ={}&
    -\kappa^2\boldsymbol M_n\hat{\boldsymbol\eta}
    -2\kappa\boldsymbol M_n\hat{\boldsymbol v}
    -\boldsymbol M_n\hat{\boldsymbol\mu}.
\end{aligned}
\label{eqn:tau_c_proof}
\end{equation}
On the prescribed compact set, there exist finite constants
$c_\eta$ and $c_v$ such that
\begin{equation}
    \left\|
        \boldsymbol M_n\hat{\boldsymbol\eta}
    \right\|
    \leq c_\eta,
    \qquad
    \left\|
        \boldsymbol M_n\hat{\boldsymbol v}
    \right\|
    \leq c_v.
\end{equation}
Suppose that the disturbance-compensation component possesses a strict
actuator-authority margin; namely, there exists $\delta>0$ such that
\begin{equation}
    \left|
        \left[
            \boldsymbol M_n\hat{\boldsymbol\mu}(t)
        \right]_i
    \right|
    \leq
    \tau_{\max}^i-\delta,
    \qquad
    i\in\{x,y,z,\phi,\theta,\psi\}.
    \label{eqn:authority_margin_proof}
\end{equation}
Since
\begin{equation}
    \kappa^2c_\eta+2\kappa c_v
    \rightarrow 0
    \qquad\text{as}\qquad
    \kappa\rightarrow0,
\end{equation}
there exists $\kappa^*>0$ such that, for every
$\kappa\in(0,\kappa^*)$,
\begin{equation}
    \kappa^2c_\eta+2\kappa c_v
    \leq\delta.
\end{equation}
It follows from \eqref{eqn:tau_c_proof} and
\eqref{eqn:authority_margin_proof} that
\begin{equation}
    |\tau_c^i(t)|
    \leq
    \tau_{\max}^i,
    \qquad
    i\in\{x,y,z,\phi,\theta,\psi\}.
\end{equation}
Consequently,
\begin{equation}
    \boldsymbol\tau_s(t)
    =
    \boldsymbol\tau_c(t)
\end{equation}
throughout the prescribed operating region.

\textit{Step 2: Fast ESO dynamics.}
Consider the fastest time scale
\begin{equation}
    s
    =
    \frac{t-t_0}{\epsilon_1\epsilon_2}.
\end{equation}
On this time scale, the slower variables
$\boldsymbol x$ and $\bar{\boldsymbol\tau}$ are treated as frozen.
Setting $\epsilon_1\epsilon_2=0$ in the scaled ESO equation of
\eqref{eqn:total_error} gives the boundary-layer system
\begin{equation}
    \frac{d\boldsymbol\zeta}{ds}
    =
    \boldsymbol H\boldsymbol\zeta.
    \label{eqn:fast_bl_proof}
\end{equation}
All eigenvalues of $\boldsymbol H$, defined in \eqref{eqn:H_E}, co-locate at $-1$.
Therefore, $\boldsymbol H$ is Hurwitz and the equilibrium $\boldsymbol\zeta=\boldsymbol 0$ of
\eqref{eqn:fast_bl_proof} is globally exponentially stable.

Since $\boldsymbol{\xi}$ is bounded,
$\epsilon_1\epsilon_2\boldsymbol{E}\boldsymbol{\xi}$ is a non-vanishing
perturbation of order $\epsilon_1\epsilon_2$. Hence,
$\boldsymbol{\zeta}$, and thus $\tilde{\boldsymbol{z}}$, is guaranteed to
converge only to an arbitrarily small neighborhood of the origin for
sufficiently small $\epsilon_1\epsilon_2$.

Setting $\boldsymbol\zeta=\boldsymbol 0$ gives the first reduced system
\begin{equation}
\label{eqn:first_reduced_proof}
\begin{aligned}
    \dot{\boldsymbol x}
    &=
    \boldsymbol A_c\boldsymbol x
    +
    \boldsymbol B_c\bar{\boldsymbol\tau},
    \\
    \epsilon_2\dot{\bar{\boldsymbol\tau}}
    &=
    -\bar{\boldsymbol\tau}
    +
    \epsilon_2
    \left(
        \boldsymbol K_x\boldsymbol A_c\boldsymbol x
        +
        \boldsymbol K_x\boldsymbol B_c
        \bar{\boldsymbol\tau}
        +
        \boldsymbol M_n\boldsymbol\xi
    \right).
\end{aligned}
\end{equation}

\textit{Step 3: Intermediate filter dynamics.}
Introduce the intermediate time scale
\begin{equation}
    r
    =
    \frac{t-t_0}{\epsilon_2}.
\end{equation}
Treating $\boldsymbol x$ as frozen and setting
$\epsilon_2=0$ in the second equation of
\eqref{eqn:first_reduced_proof} gives the filter boundary-layer system
\begin{equation}
    \frac{d\bar{\boldsymbol\tau}}{dr}
    =
    -\bar{\boldsymbol\tau}.
    \label{eqn:filter_bl_proof}
\end{equation}
Since $\boldsymbol{\xi}$ is bounded,
$\epsilon_1\epsilon_2\boldsymbol{E}\boldsymbol{\xi}
=\mathcal{O}(\epsilon_1\epsilon_2)$. Hence, $\boldsymbol{\zeta}$, and thus
$\tilde{\boldsymbol{z}}$, converges to an arbitrarily small neighborhood
of the origin as $\epsilon_1\epsilon_2$ decreases. Therefore, the slow
reduced system \eqref{eqn:slow_reduced_proof}, intermediate boundary-layer
system \eqref{eqn:filter_bl_proof}, and fast boundary-layer system
\eqref{eqn:fast_bl_proof} are exponentially stable. Moreover,
according to Remark 1, 
$\boldsymbol{\xi}$ is uniformly bounded on the prescribed compact set.
Consequently, all conditions of the hierarchical singular perturbation
theorem are satisfied.

It follows from the hierarchical singular perturbation theorem that,
for every prescribed compact set of initial error states and every
$\nu>0$, there exist sufficiently small
$\epsilon_2>0$ and, subsequently, sufficiently small
$\epsilon_1>0$ such that the solutions of the complete error system
\eqref{eqn:total_error} satisfy
\begin{equation}
    \|\boldsymbol\chi(t)\|
    \leq
    \beta
    \left(
        \|\boldsymbol\chi(t_0)\|,
        t-t_0
    \right)
    +
    \nu,
    \qquad
    t\geq t_0,
\end{equation}
for some $\beta\in\mathcal{KL}$.  Therefore, the origin of
$(\boldsymbol{x},\bar{\boldsymbol{\tau}},\boldsymbol{\zeta})$
is semi-globally practically asymptotically stable, completing the proof.
\hfill $\Box$
\begin{remark}
Theorem~\ref{Theorem_Main} guides IC-ADRC tuning. First, select $\kappa$
to achieve the desired response without initial actuator saturation. Next,
choose $\lambda$ sufficiently large for accurate uncertainty estimation
but not so large that it amplifies measurement noise. Finally, select
$\omega$ such that
\[
\kappa\ll\omega\ll\lambda,
\]
while balancing phase lag, noise attenuation, and actuator rate limits.
Fig.~\ref{fig:pole} illustrates this three-time-scale separation.
\hfill $\circ$
\end{remark}

\begin{figure}[htb!]
    \centering
    \includegraphics[width=0.6\linewidth]{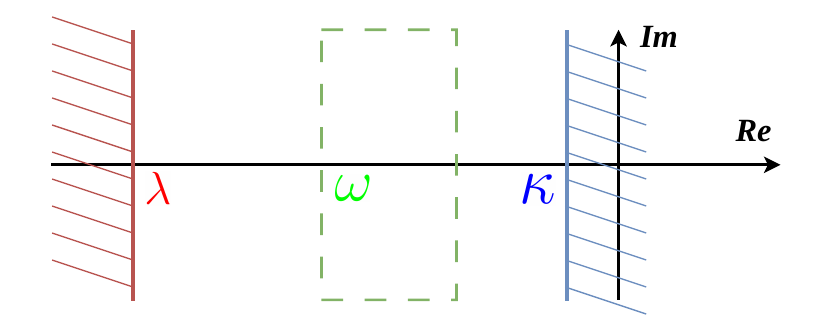}
 \caption{Three-time-scale pole separation under
$\kappa\ll\omega\ll\lambda$.}
    \label{fig:pole}
\end{figure}

%%%%%%%%%%%%%%%%%%%%%%%%%%%%%%%%%%%%%%%%%%%%%%%%%%%%%%%%%%%%%
%%%%%%%%%%%%%%%%%%%%%%%%%%%%%5
\section{Simulation Validation}
\label{sec:sim}

This section evaluates IC-ADRC through four high-fidelity simulations.
Sections~\ref{sec:setup1} and \ref{sec:compare} examine parameter tuning,
post-saturation filtering, violation of time-scale separation, and
comparison with existing controllers, respectively.

The simulations use the fully actuated heavy configuration of BlueROV2,
governed by \eqref{eqn:kinematics} and \eqref{eqn:dynamics} with parameters
from \cite{blue}. The ROS simulator runs at 100~Hz, with the world frame
initialized at the vehicle's initial position.

External disturbances are introduced through oscillatory surface waves
and steady ocean currents to emulate underwater operating conditions.
The wave disturbance is modeled using Airy wave theory \cite{wave}, under
which water particles follow elliptical trajectories determined by the
water depth. Since the vehicle is small relative to the wavelength, it is
approximated as a water particle. The wave propagates in the negative
$x$-direction with wavelength $L=5~\mathrm{m}$, height
$H=2~\mathrm{m}$, period $T=10~\mathrm{s}$, and water depth
$h=10~\mathrm{m}$. The vehicle is initially located $0.5~\mathrm{m}$
below the calm water surface. The steady ocean current is modeled as the constant external force
\begin{equation}
\label{eqn:current}
\boldsymbol{\tau}_{\mathrm{cur}}
=
[-10,\,0,\,0,\,0,\,0,\,0]^\top,
\end{equation}
acting in the direction of wave propagation.

Overall convergence is quantified by the pose error norm (PEN), defined as the Euclidean norm of $\boldsymbol{\eta}$
\begin{equation}
\label{eqn:eta_norm}
e_{\boldsymbol{\eta}}=\|\boldsymbol{\eta}\|.
\end{equation}
Equal weighting is assigned to the translational ($x$, $y$, $z$) and
rotational ($\phi$, $\theta$, $\psi$) components.

The proposed controller is evaluated through two representative control
tasks. The \emph{stabilization} task requires the vehicle to converge from
a nonzero initial displacement to the origin. The
\emph{station-keeping} task evaluates disturbance rejection by maintaining
the vehicle at the origin under environmental disturbances. Together,
these tasks assess transient response and disturbance rejection.

\subsection{Influence of Controller Parameters on Closed-Loop Performance}
\label{sec:setup1}

\begin{figure*}
    \centering
    \includegraphics[width=\textwidth]{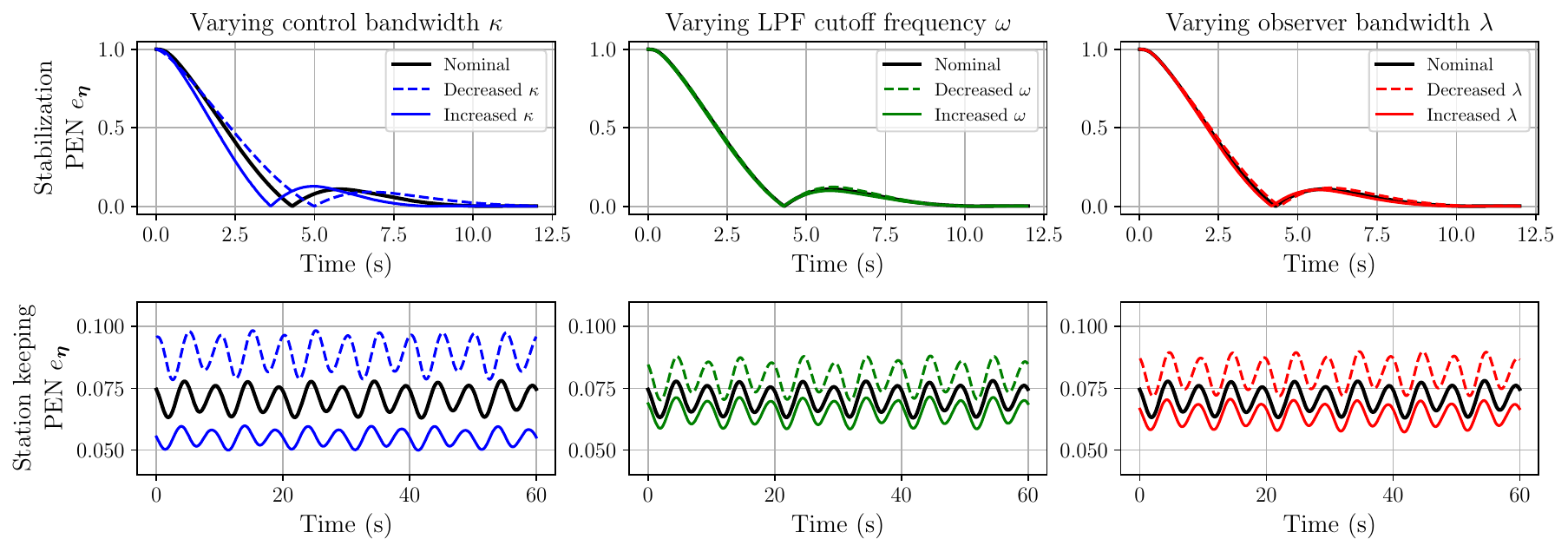}
    \caption{Sensitivity of the pose error to variations in $\kappa$, $\omega$,
and $\lambda$ for stabilization (top) and station keeping (bottom). Solid,
dashed, and black lines denote increased, decreased, and nominal parameter
values, respectively.}
    \label{fig:setup1}
\end{figure*}

This study examines the effect of controller parameter tuning on
closed-loop performance while preserving the time-scale separation required
by Theorem~\ref{Theorem_Main}. The nominal parameters are listed in
Table~\ref{tab:nominal_params}. The saturation limits in \eqref{eqn:sat}
are
\begin{equation}
\label{eqn:sat_value}
\boldsymbol{\tau}_{\max}=[40,40,40,10,10,10]^\top.
\end{equation}

\begin{table}
\centering
\caption{IC-ADRC Parameters}
\label{tab:nominal_params}
\renewcommand{\arraystretch}{1.1}
\begin{tabular}{ll}
\toprule
\textbf{Parameter} & \textbf{Value} \\
\midrule
Control bandwidth & $\kappa=1$ \\
Low-pass filter angular cutoff frequency & $\omega = 4$ \\
Extended State Observer (ESO) bandwidth & $\lambda=10$ \\
\bottomrule
\end{tabular}
\end{table}

To evaluate the sensitivity of the proposed controller, each tuning parameter is independently varied by $\pm20\%$, while all other parameters are held at their nominal values as in Table~\ref{tab:nominal_params}.
The resulting configurations are detailed in Table~\ref{tab:parameter_variation}.

\begin{table}
\centering
\caption{Parameter Variations}
\label{tab:parameter_variation}
\renewcommand{\arraystretch}{1.1}
\begin{tabular}{p{2.2cm}lll}
\toprule
\textbf{Case} & \multicolumn{3}{l}{\textbf{Modified Parameters}} \\
\midrule
Decreased $\kappa$ & $\kappa=0.8$ & $\omega=4$ & $\lambda=10$ \\
Increased $\kappa$ & $\kappa=1.2$ & $\omega=4$ & $\lambda=10$ \\
Decreased $\omega$ & $\kappa=1$ & $\omega=3.2$ & $\lambda=10$ \\
Increased $\omega$ & $\kappa=1$ & $\omega=4.8$ & $\lambda=10$ \\
Decreased $\lambda$ & $\kappa=1$ & $\omega=4$ & $\lambda=8$ \\
Increased $\lambda$ & $\kappa=1$ & $\omega=4$ & $\lambda=12$ \\
\bottomrule
\end{tabular}
\end{table}

The controller is evaluated under two representative operating conditions:
\begin{itemize}
\item Stabilization from an initial displacement of $1~\mathrm{m}$ along the $x$-axis.
\item Station keeping under simulated wave disturbances and ocean currents.
\end{itemize}

Fig.~\ref{fig:setup1} presents the sensitivity analysis for the
stabilization (top row) and station-keeping (bottom row) tasks. Overall,
closed-loop performance improves as the controller bandwidths increase.
Among the three parameters, $\kappa$ has the greatest influence on
convergence and disturbance rejection, whereas variations in $\omega$ and
$\lambda$ produce smaller deviations from the nominal response. These results agree with Theorem~\ref{Theorem_Main}. Under sufficient
time-scale separation, $\kappa$ primarily determines the closed-loop
dynamics, while $\omega$ and $\lambda$ govern control smoothness and
disturbance estimation accuracy, respectively. 
However, practical constraints limit these parameters. Increasing
$\kappa$ produces aggressive control actions, actuator saturation, and
overshoot. Increasing $\omega$ amplifies high-frequency control chattering,
as demonstrated in Section~\ref{sec:setup3}, while an excessively large
$\lambda$ amplifies measurement noise and introduces closed-loop
oscillations.

\subsection{Effect of Low-Pass Filter Selection}
\label{sec:setup2}

\begin{figure}[tb!]
    \centering
    \includegraphics[width=0.8\linewidth]{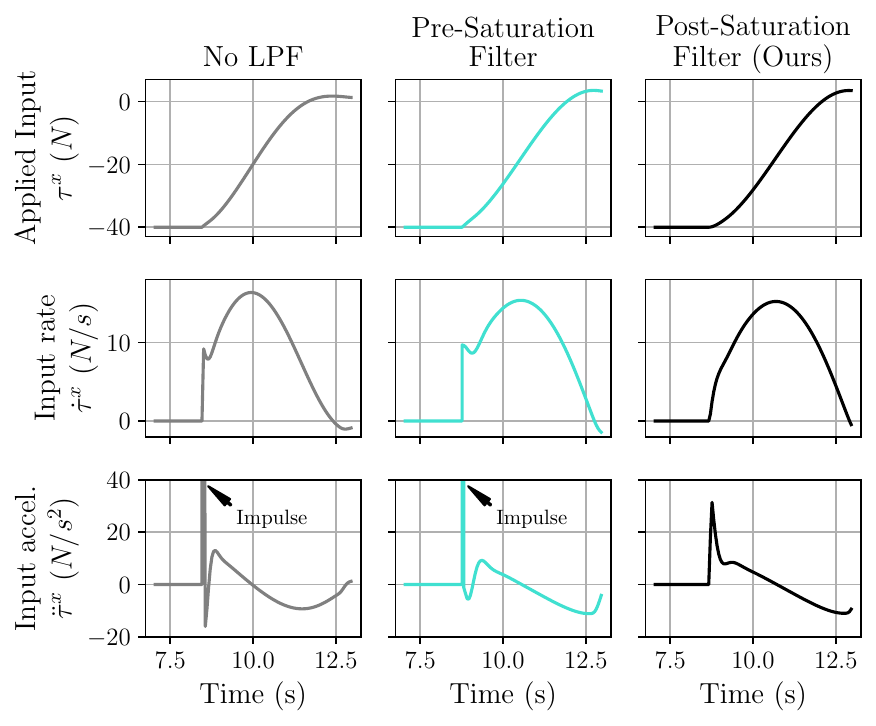}
    \caption{Effect of filter placement at the saturation boundary
($\tau_{\max}^{x}=40$~N): no filtering (left), pre-saturation filtering
(center), and post-saturation filtering (right). Only post-saturation
filtering ensures a continuously differentiable input with bounded rate and
acceleration.}
    \label{fig:setup2}
\end{figure}

Physical actuators are inherently subject to both magnitude and rate constraints. 
Although standard ADRC effectively compensates for disturbances, it does not explicitly enforce these constraints. 
In particular, the commanded input $\boldsymbol{\tau}_c$ \eqref{eqn:tau_c} may become arbitrarily large due to the high-gain observer, motivating the active input saturation introduced in \eqref{eqn:sat}.

While input saturation limits the control magnitude, it introduces a new challenge. 
The hard saturation function is continuous but non-differentiable at the transition boundary $(|{\tau}_c^i|={\tau}_{\max}^i)$. 
Consequently, the control input rate becomes discontinuous, potentially introducing numerical stiffness in simulation and excessive mechanical stress in physical actuators. 
The proposed low-pass filter (LPF) addresses this issue by smoothing the saturated control input, thereby ensuring a continuously differentiable and physically realizable input signal.

To investigate the effect of filter placement, the vehicle is initialized $5~\mathrm{m}$ from the origin in the $x$-axis so that the commanded surge force immediately exceeds the actuator limit, i.e.,
\[
\tau_c^{x}(t_0)>\tau_{\max}^{x}.
\]

Three controller configurations are compared while all remaining parameters are fixed at their nominal values in Table~\ref{tab:nominal_params}:
\begin{itemize}
\item \textbf{Baseline (No LPF):} Active input saturation only.
\item \textbf{Pre-saturation LPF:} The low-pass filter is applied before the saturation block.
\item \textbf{Post-saturation LPF (proposed):} The low-pass filter is applied after the saturation block.
\end{itemize}

Fig.~\ref{fig:setup2} compares the control signals during the transition
from saturated to unsaturated operation. Without filtering (left column),
the control input has a nondifferentiable transition, resulting in a
discontinuous rate $\dot{\tau}^{x}$ and an impulse in the acceleration
$\ddot{\tau}^{x}$. This behavior is physically unrealizable and may excite
actuator dynamics or cause numerical instability. Pre-saturation filtering
(center column) cannot remove the discontinuity because the downstream
saturation reintroduces the corner. In contrast, post-saturation filtering
(right column) smooths the constrained input, ensuring a continuous rate
and bounded acceleration. Thus, only post-saturation filtering preserves
the actuator constraints while providing the input smoothness required for
practical implementation and the stability analysis in
Theorem~\ref{Theorem_Main}.

\subsection{Influence of Time-Scale Separation on Closed-Loop Performance}
\label{sec:setup3}

\begin{figure}[htb!]
    \centering
    \includegraphics[width=0.8\linewidth]{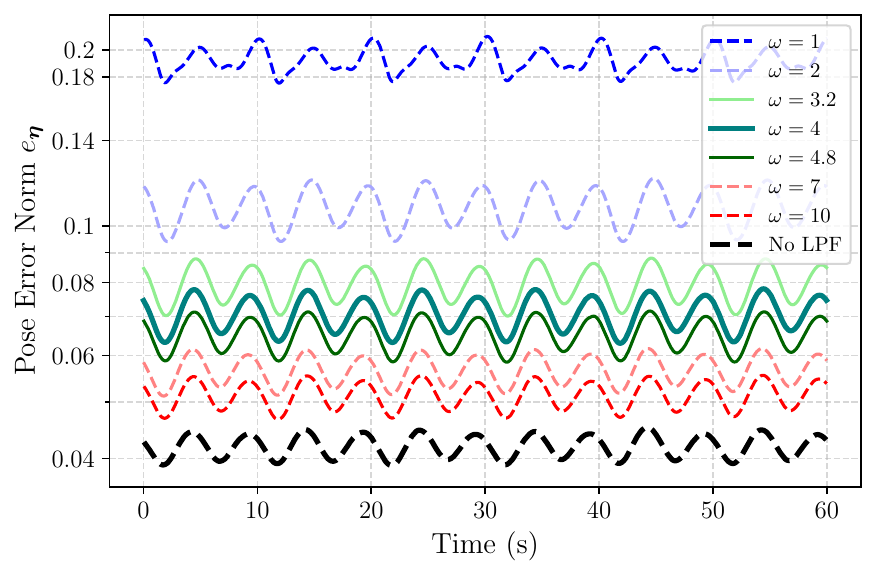}
    \caption{Effect of the LPF cutoff frequency $\omega$ on stabilization.
Solid and dashed lines denote recommended and extreme values, respectively;
the dashed black line represents the unfiltered case.}
    \label{fig:setup3_wave}
\end{figure}  

\begin{figure}[htb!]
    \centering
    \includegraphics[width=0.8\linewidth]{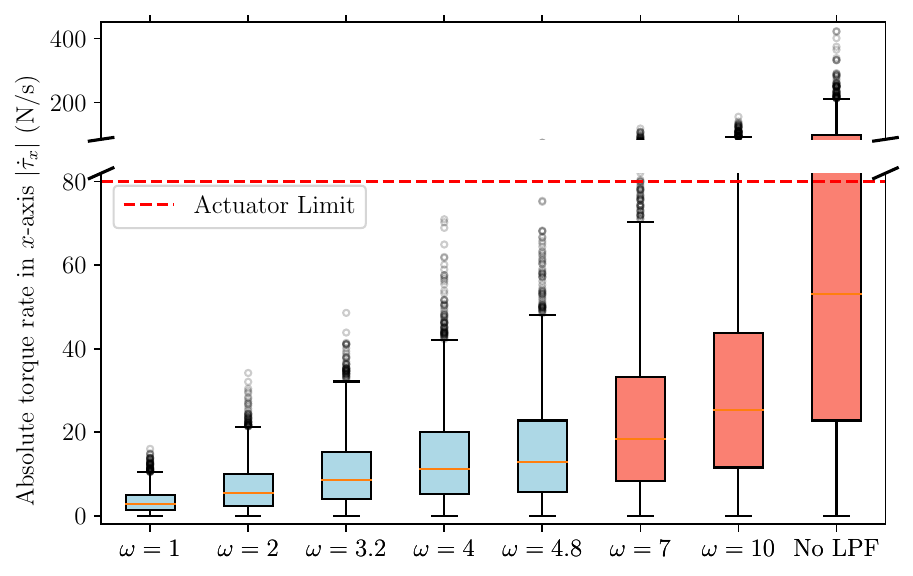}
    % \caption{Control errors in surge and heave under wave disturbances for different angular cutoff frequencies.}
    \caption{Distribution of the absolute surge input rate
$|\dot{\tau}^{x}|$ under measurement noise. The dashed red line denotes
the actuator limit of $80$~N/s; the broken axis shows extreme outliers.}
    \label{fig:setup3_noise}
\end{figure}

This subsection examines the effect of time-scale separation on closed-loop
performance and validates Theorem~\ref{Theorem_Main}. The cutoff frequency
$\omega$ is varied while $\kappa$ and $\lambda$ remain at their nominal
values in Table~\ref{tab:nominal_params}. The tested values are grouped
into four categories:
\begin{itemize}
    \item \textbf{Lower-bound violation}: $\omega=\{1,2\}$, where the filter bandwidth is insufficiently large relative to the controller bandwidth, introducing excessive phase lag.
    
    \item \textbf{Upper-bound violation}: $\omega=\{7,10\}$, where the filter bandwidth is excessively large relative to the observer dynamics, increasing sensitivity to measurement noise.
    
    \item \textbf{Valid tuning}: $\omega=\{3.2,4,4.8\}$, satisfying the required time-scale separation.
    
    \item \textbf{No LPF}: the low-pass filter is removed, i.e.,
    $\boldsymbol{\tau}=\boldsymbol{\tau}_s$.
\end{itemize}
Two scenarios are considered:
\begin{itemize}
    \item \textbf{Wave disturbance}: The vehicle is subjected to the simulated wave disturbances and ocean currents, while perfect state measurements are assumed.
    
    \item \textbf{Measurement noise}: External wave disturbances are removed, while zero-mean Gaussian noise is added to the measurement equation
    \begin{equation}
            \boldsymbol{y}=\boldsymbol{\eta}+\boldsymbol{\varepsilon}, \qquad \boldsymbol{\varepsilon} = [\varepsilon^{x},0,0,0,0,0]^\top ,
    \end{equation}
    where $\varepsilon^{x}\sim\mathcal{N}(0,\sigma^2)$ with $\sigma=0.01~\mathrm{m}$. 
    The noise is truncated at $\pm4\sigma$ to avoid unrealistic outliers.
\end{itemize}

We first examine disturbance rejection under wave excitation.
Fig.~\ref{fig:setup3_wave} shows the pose error
$e_{\boldsymbol{\eta}}$ for varying $\omega$. Within the
admissible range, increasing $\omega$ reduces the filter-induced phase lag,
allowing faster disturbance rejection and reducing the pose error. In
contrast, values below this range introduce excessive delay and degrade
performance by nearly an order of magnitude. The second scenario shows why
$\omega$ cannot be increased arbitrarily. Based on the actuator dynamics
identified in \cite{blue,t200}, a conservative slew time of
$0.5~\mathrm{s}$ is adopted, corresponding to a maximum surge input rate
of $80~\mathrm{N/s}$. As shown in Fig.~\ref{fig:setup3_noise}, removing
the filter or selecting an excessively large $\omega$ amplifies
measurement noise, producing rapid control variations that frequently
exceed this limit.

These results support Theorem~\ref{Theorem_Main}. A small $\omega$
increases phase lag and degrades disturbance rejection, whereas a large
$\omega$ amplifies noise and may violate actuator rate limits. Therefore,
$\omega$ should be chosen between $\kappa$ and $\lambda$ to balance
disturbance rejection, noise attenuation, and actuator feasibility.

\subsection{Comparative Evaluation}
\label{sec:compare}

This comparison evaluates representative control strategies with different
design philosophies under identical operating conditions of the \textit{station-keeping} task. The assessment
considers station-keeping accuracy, disturbance rejection, control effort,
and compliance with input and input-rate constraints. The controllers being compared are:

\begin{itemize}
\item \textbf{PID}: 
A proportional--integral--derivative (PID) controller is included for its simplicity and widespread industrial and marine adoption \cite{Han2009ADRC,fossen}. 
For a fair comparison, the P and D gains are matched to the nominal state-feedback component of IC-ADRC, enforcing identical stiffness and damping, isolating the performance contributions of the ESO and the input-constrained architecture. 
The integral gain is empirically tuned to cancel steady-state errors without degrading transient response, yielding:
\(
\boldsymbol{K}_P=\boldsymbol{M}_n,\quad
\boldsymbol{K}_D=2\boldsymbol{M}_n, \quad
\boldsymbol{K}_I=0.2\boldsymbol{M}_n.
\)
\item \textbf{SMC}:
Sliding mode control is included as a representative nonlinear strategy
with inherent robustness to matched disturbances and parameter
uncertainties \cite{Utkin1992}, both common in underwater robotics
\cite{blue,blue2}. Following \cite{blue,blue2}, a 6-DOF element-wise SMC
with a sigmoid approximation is used to reduce high-frequency switching.
The parameters $\boldsymbol{\Lambda}$ and $\boldsymbol{\epsilon}$ are
retained from \cite{blue,blue2}, while
\[
\boldsymbol{K}=\operatorname{diag}(39,39,39,2,2,2).
\]
Compared with \cite{blue}, the translational gains are increased by
approximately $25\%$ to reject peak disturbances while retaining sufficient
margin below the saturation limit \eqref{eqn:sat_par}.

\item \textbf{NMPC}:
Nonlinear model predictive control is included as a representative
model-based optimal control strategy using a receding-horizon framework
\cite{Grune2017}. It repeatedly optimizes the control sequence using the
current state and a nonlinear prediction model, providing feedback while
explicitly enforcing input constraints. These features make NMPC attractive
for underwater robotics \cite{mpc5,mpc6}. Following \cite{mpc5,mpc6}, the prediction model uses the nominal vehicle
dynamics \eqref{eqn:kinematics}--\eqref{eqn:dynamics}, excluding the unknown
wave disturbance $\boldsymbol{\tau}_{\mathrm{wave}}$. The finite-horizon
problem is implemented in CasADi \cite{casadi}, with
\[
\boldsymbol{Q}=10^5\boldsymbol{I}_{12},\qquad
\boldsymbol{R}=\boldsymbol{I}_{6},
\]
a prediction horizon $N=20$, and a sampling period
$\Delta t=0.1$~s. The optimization uses a fourth-order Runge--Kutta
discretization and enforces the actuator limits
\eqref{eqn:sat_value} and slew time $t_{\mathrm{slew}}=0.5$~s from
Section~\ref{sec:setup2}.  

\item \textbf{IC-ADRC}:
The proposed IC-ADRC combines an ESO, state feedback control law, active input
saturation, and a post-saturation low-pass filter. It preserves ADRC
disturbance compensation while enforcing practical input and input-rate
constraints, providing a computationally efficient solution for underwater
robots with actuator limitations. The parameters are listed in
Table~\ref{tab:nominal_params}.
\end{itemize}

\subsubsection*{Test Cases}

The four controllers are evaluated in the \textit{station-keeping} task under six operating
conditions covering nominal, decomposed, and challenging disturbances. In
all cases, the robot is initially stationary at $z_0=0.5$~m in water of
depth $h=10$~m. The wavelength is fixed at $L=5$~m, while the wave height
$H$, period $T$, and current force $\tau_{\mathrm{cur}}^x$ along the
negative $x$-axis vary as listed in Table~\ref{tab:testcases}.
\begin{table}[htb!]
\renewcommand{\arraystretch}{1.25} 
\centering
\caption{Simulation test cases for controller comparison.}
\label{tab:testcases}
\begin{tabular}{|c|l|c|c|c|}
\hline
\textbf{Wave} & \textbf{Description}  \rule[-6pt]{0pt}{16pt}  & $H$ (m) & $T$ (s) & $\tau_{\mathrm{cur}}^x$ (N)\\
\hline
1 & Nominal              & 2.0 & 10 & $-10$\\\hline
2 & Wave only           & 2.0 & 10 & 0\\ \hline
3 & Current only         & 0   & -- & $-10$\\ \hline
4 & High-frequency wave  & 0.1 & 2  & $-10$\\ \hline
5 & High-amplitude wave  & 6.0 & 10 & $-10$\\ \hline
6 & Strong current       & 2.0 & 10 & $-30$\\ 
\hline
\end{tabular}
\end{table}

\subsubsection*{Performance Metrics}

Controller performance is evaluated using four metrics for tracking
accuracy, control effort, and actuator feasibility.

\begin{itemize}
\item \textbf{Average Pose Error Norm (APEN):}
The mean pose error over the simulation horizon is
\begin{equation}
\label{eqn:avg_error_discrete}
\bar{e}_{\boldsymbol{\eta}}
=
\frac{1}{N}\sum_{k=1}^{N}e_{\boldsymbol{\eta}}(t_k),
\end{equation}
where $e_{\boldsymbol{\eta}}$ is defined in \eqref{eqn:eta_norm}. APEN
measures overall station-keeping accuracy.
\item \textbf{Maximum Pose Error Norm (MPEN):}
The maximum pose error is
\begin{equation}
\label{eqn:max_error_discrete}
e_{\boldsymbol{\eta}}^{\max}
=
\max_{k=1,\ldots,N}e_{\boldsymbol{\eta}}(t_k),
\end{equation}
measuring the largest deviation from the desired pose.
\item \textbf{Average Control Effort (ACE):}
The average control effort is
\begin{equation}
\label{eqn:avg_energy}
\bar{E}_{\boldsymbol{\tau}}
=
\frac{1}{N}\sum_{k=1}^{N}
\left\|\boldsymbol{\tau}(t_k)\right\|,
\end{equation}
which measures the mean actuator demand during station keeping. A lower
ACE is preferable for battery-powered underwater vehicles.
\item \textbf{Maximum Input Rate (MIR):}
Actuator feasibility is assessed using the maximum input rate over the
translational degrees of freedom:
\begin{equation}
\label{eqn:max_input_rate}
\dot{\tau}_{\max}
=
\max_{\substack{k=1,\ldots,N-1\\j\in\{x,y,z\}}}
\left|
\frac{\tau_j(t_{k+1})-\tau_j(t_k)}{\Delta t}
\right|,
\end{equation}
where $\Delta t=t_{k+1}-t_k$. This metric evaluates compliance with the
slew-rate constraint in Section~\ref{sec:setup3}.

\end{itemize}

Table~\ref{tab:compare} compares the four controllers under all six
operating conditions. Several observations follow.

First, disturbance compensation strongly affects station-keeping accuracy.
Under wave-only conditions (Wave~2), SMC achieves the lowest APEN and MPEN,
while NMPC also performs well through model-based prediction. However,
because the NMPC model excludes unknown environmental disturbances, its
accuracy deteriorates when persistent currents are introduced. In contrast,
IC-ADRC performs particularly well in the current-dominated cases
(Waves~3 and 6) by estimating and compensating the lumped uncertainty
through the ESO.

Second, SMC achieves the smallest tracking errors in several wave-dominated
cases but requires aggressive control action. Despite the sigmoid boundary
layer, its maximum input rate is more than one order of magnitude higher
than that of IC-ADRC in all cases, limiting its practicality for
actuator-constrained underwater vehicles.

Third, PID reduces steady-state errors under constant disturbances through
integral action. However, its limited bandwidth weakens its rejection of
rapidly varying waves, resulting in larger APEN and MPEN in most
wave-dominated cases.

Overall, IC-ADRC provides a favourable balance among tracking accuracy,
disturbance rejection, control effort, and actuator feasibility. By
combining online disturbance estimation with explicit input and input-rate
constraints, it achieves competitive tracking with the lowest actuator
slew rates across all operating conditions.

\begin{table}[htb!]
\centering
\caption{Controller performance under varying wave conditions. Top-performing results are shown in bold.}
\label{tab:compare}
\renewcommand{\arraystretch}{1.0} 
% \begin{tabular}{\textwidth}{@{\extracolsep{\fill}} l l c c c c @{}}
\begin{tabular}{l l c c c c}
\toprule
\textbf{Wave Settings} & \textbf{Metrics} & \textbf{NMPC} & \textbf{SMC} & \textbf{PID} & \makecell{\textbf{IC-ADRC} \\ \textbf{(Ours)}} \\
\midrule

% --- Group 1: The Baseline ---
\multirow{4}{*}{Wave 1} 
& APEN $\downarrow$ & 0.0631          & \textbf{0.0291} & 0.1236 & 0.0682 \\
& MPEN $\downarrow$ & {0.0908}        & \textbf{0.0447}  & 0.4453 & 0.1449 \\
& ACE $\downarrow$  & \textbf{11.16}         & 16.85           & \textbf{11.23}  & \textbf{11.32} \\
& MIR $\downarrow$   & 56.9           & 962.2           & 201.0  & \textbf{26.3} \\

\midrule
\multicolumn{6}{l}{\textit{Disturbance Decomposition}} \\
\midrule

\multirow{4}{*}{ \makecell[cl]{Wave 2 \\ (Wave only)} }
& APEN $\downarrow$ & {0.0185}      & \textbf{0.0073}   & 0.1155        & 0.0673 \\
& MPEN $\downarrow$ & {0.0193}      & \textbf{0.0109}   & 0.1396        & 0.0704 \\
& ACE $\downarrow$  & \textbf{5.56}   & 9.85            & \textbf{5.20} & \textbf{5.60} \\
& MIR $\downarrow$  & 44.9            & 950.9           & 159.6         & \textbf{15.2} \\
\addlinespace
\multirow{4}{*}{\makecell[cl]{Wave 3 \\ (Current only)}} 
& APEN $\downarrow$ & 0.0596         &  0.0275          & 0.0144   & \textbf{0.0016} \\
& MPEN $\downarrow$ & {0.0599}      & \textbf{0.0285}   & 0.3260   & {0.1126} \\
& ACE $\downarrow$  & \textbf{9.99}        & 15.43           & \textbf{10.02 }  & \textbf{9.98} \\
& MIR $\downarrow$  & 46.2            & 932.3           & 158.1   & \textbf{24.5} \\

\midrule
\multicolumn{6}{l}{\textit{Extreme Robustness Tests}} \\
\midrule

\multirow{4}{*}{ \makecell[cl]{Wave 4 \\ (High-freq)}} 
& APEN $\downarrow$ & {0.0643}          & \textbf{0.0302} & 0.0335          & \textbf{0.0305} \\
& MPEN $\downarrow$ & {0.0787}   & \textbf{0.0479}         & 0.3404         & {0.1340} \\
& ACE $\downarrow$   & 11.92            & 16.45          & \textbf{11.26}  &  \textbf{11.41} \\
& MIR $\downarrow$  & 63.1              & 1115.7         & 175.5           & \textbf{40.3} \\
\addlinespace
\multirow{4}{*}{ \makecell[cl]{Wave 5 \\ (High-amp)} } 
& APEN $\downarrow$ & 0.1041            & \textbf{0.0478} & 0.3001          & 0.2087 \\
& MPEN $\downarrow$ & 0.1829            & \textbf{0.1097} & 0.7631          & 0.2390 \\
& ACE $\downarrow$  & 18.84             & 25.19           & \textbf{17.13} & 21.97 \\
& MIR $\downarrow$  & 79.1              & 933.4           & 319.2          & \textbf{69.4} \\
\addlinespace
\multirow{4}{*}{ \makecell[cl]{Wave 6 \\ (Strong current)}} 
& APEN $\downarrow$ & 0.2733            & 0.2983        & 0.1515        & \textbf{0.0719} \\
& MPEN $\downarrow$ & \textbf{0.3215}   & 0.4199        & 1.0402        & \textbf{0.3369} \\
& ACE $\downarrow$  & \textbf{30.41}    & 38.68         & \textbf{30.75} & \textbf{30.45} \\
& MIR $\downarrow$  & 75.5              & 953.1         & 246.5         & \textbf{63.7} \\

\bottomrule
\end{tabular}
\end{table}

%%%%%%%%%%%%%%%%%%%%%%%%%%%%%%%%%%%%%%%%%%%%%%%%%%%%%%%%%%%%%%%%%%%%%%%%%%%%%%%%%%%

\section{Experimental Validation}
\label{sec:experiment}

This section experimentally evaluates IC-ADRC with active saturation and
low-pass filtering. The objective is to verify the simulated performance
on a physical underwater vehicle subject to actuator limitations and
environmental disturbances.

Experiments are conducted in a
$6~\mathrm{m}\times1.8~\mathrm{m}\times1~\mathrm{m}$ indoor tank using
the BlueROV2 Heavy Configuration\footnote{\url{https://bluerobotics.com/store/rov/bluerov2/}},
as shown in Fig.~\ref{fig:experiment}. The controllers run on a tethered
external computer and transmit commands via
MAVLink\footnote{\url{https://mavlink.io/en/}} at
10~Hz because of onboard bandwidth and safety constraints. Ground-truth
pose is measured using a four-camera Qualisys underwater motion-capture
system (Arqus, Qualisys AB, Gothenburg,
Sweden)\footnote{\url{https://www.qualisys.com/cameras/underwater/}}. 
Environmental disturbances are generated by a programmable wave generator
driven by sinusoidal signals with frequency $f_w$ and amplitude $A_w$
\cite{wavegen}. Disturbance rejection is evaluated at the desired
equilibrium under three conditions:

\begin{itemize}
    \item \textbf{Case 1 (Still water):} No external wave disturbance;
    only hydrodynamic uncertainties and tank-wall reflections are present.

    \item \textbf{Case 2 (Moderate wave):}
    $f_w=1~\mathrm{Hz}$ and $A_w=1~\mathrm{V}_{pp}$.

    \item \textbf{Case 3 (Large wave):}
    $f_w=0.5~\mathrm{Hz}$ and $A_w=1.5~\mathrm{V}_{pp}$.
\end{itemize}
\begin{figure}[htb!]
    \centering    
    \includegraphics[width=0.6\linewidth]{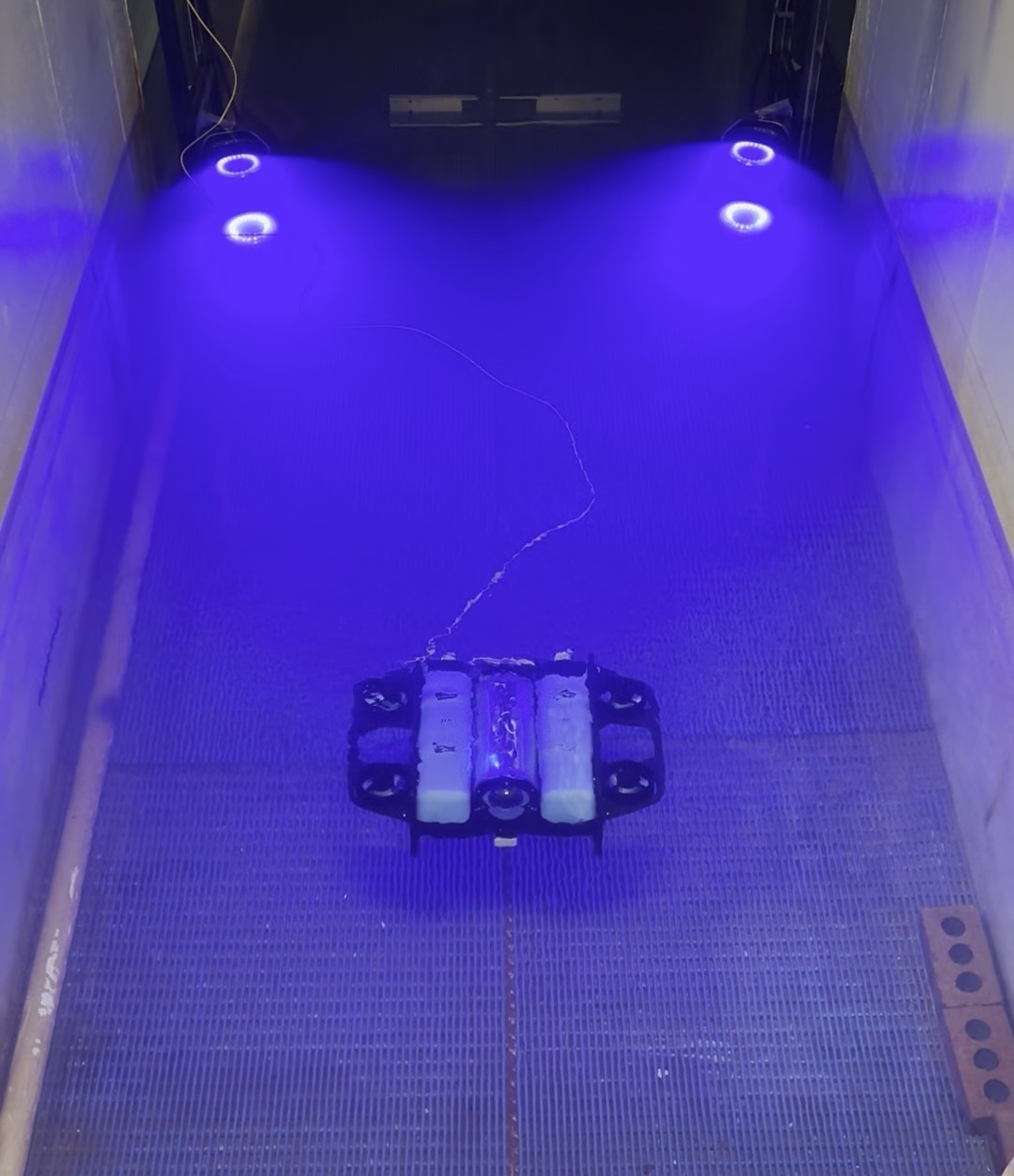}
\caption{The indoor water tank with the BlueROV2 Heavy Configuration and Qualisys Motion Capture system for the practical experiments.}
\label{fig:experiment}
\end{figure}
\begin{figure}[htb!]
    \centering
    \includegraphics[width=0.8\linewidth]{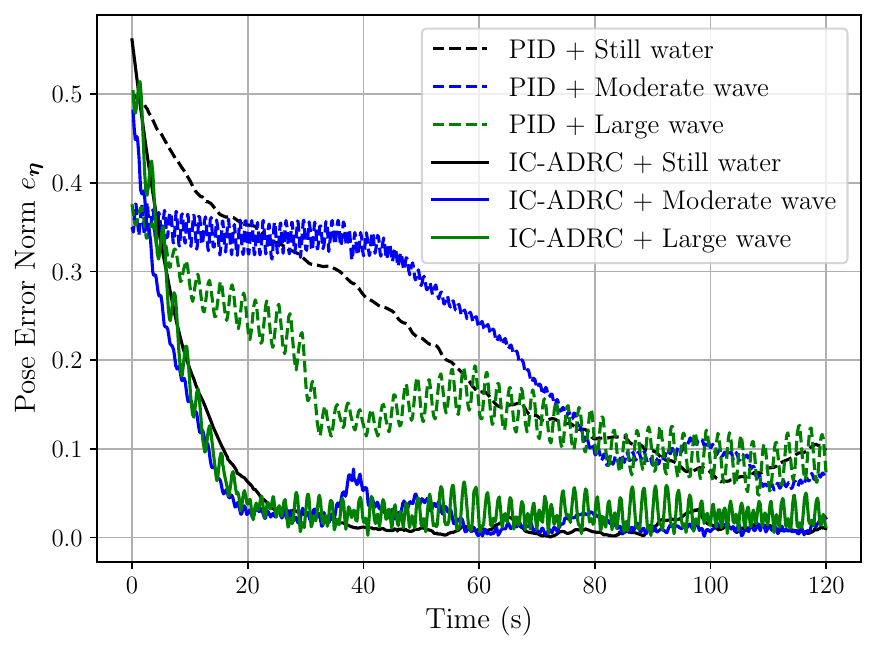}
    \caption{Real robot experiment: Pose Error Norm of {IC-ADRC} and PID in different test cases}
    \label{fig:real}
\end{figure}

IC-ADRC is compared with a benchmark PID controller representing the
industrial standard for marine platforms. Since advanced controllers were
evaluated in Section~\ref{sec:compare}, the experiments focus on this
industrial baseline under practical operating conditions. Both controllers
use the parameters from Section~\ref{sec:compare}, with the nominal
IC-ADRC parameters listed in Table~\ref{tab:nominal_params}.

Fig.~\ref{fig:real} compares the motion-capture pose errors of PID and
IC-ADRC under the three experimental conditions. Small differences in
initial position are unavoidable between runs. In all cases, IC-ADRC
achieves faster convergence and smaller errors. In still water, IC-ADRC
rapidly regulates the vehicle, whereas PID exhibits a steady-state error of
approximately $10~\mathrm{cm}$, primarily because of the thruster dead
zone.

The performance gap increases under stronger waves. PID exhibits larger
oscillations and persistent errors, whereas IC-ADRC compensates for the
unmodeled actuator dynamics and wave disturbances. These results agree
with the simulations and confirm the practical effectiveness of the
proposed time-scale-separated architecture.

%%%%%%%%%%%%%%%%%%%%%%%%%%%%%%%%%%%%%%%%%%%%%%%%%%%%%%%%%%%%%%%%%%%%%%%%%%%%%%%%%%%

\section{Conclusion}
\label{sec:con}

This paper presented an input-constrained active disturbance rejection
control (IC-ADRC) framework that combines active saturation with
post-saturation low-pass filtering to enforce AUV actuator constraints.
Singular perturbation analysis established the required time-scale
separation and semi-global practical stability. Simulations and experiments
demonstrated reduced noise-induced chattering, compliance with thruster
rate limits, and robust stabilization under environmental disturbances.
Future work will improve transient performance under rapidly varying wave
disturbances.

%%%%%%%%%%%%%%%%%%%%%%%%%%%%%%%%%%%%%%%%%%%%%%
\section*{Acknowledgment}
This work was supported by the Melbourne Research Scholarship and the
Australian Research Council under Grant DE220101527.

\bibliographystyle{ieeetr}
\bibliography{ref-short}

\begin{IEEEbiography}[{\includegraphics[width=1in,height=1.25in,clip,keepaspectratio]{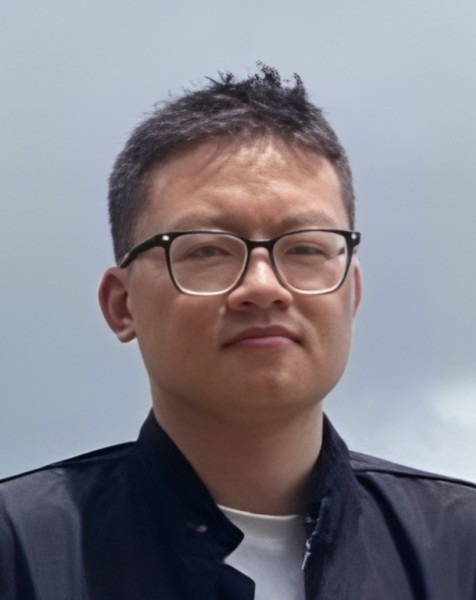}}]{Minh-Quan Le}
received his bachelor's degree in Mechatronics Engineering from Hanoi University of Science and Technology in 2023. 
Since 2024, he has been working toward a Ph.D. degree in Engineering and IT at the University of Melbourne, VIC, Australia. 
His research interests include robust control and navigation, focusing on interdisciplinary applications in underwater autonomy utilizing vision-based perception.
\end{IEEEbiography}

\begin{IEEEbiography}[{\includegraphics[width=1in,height=1.25in,clip,keepaspectratio]{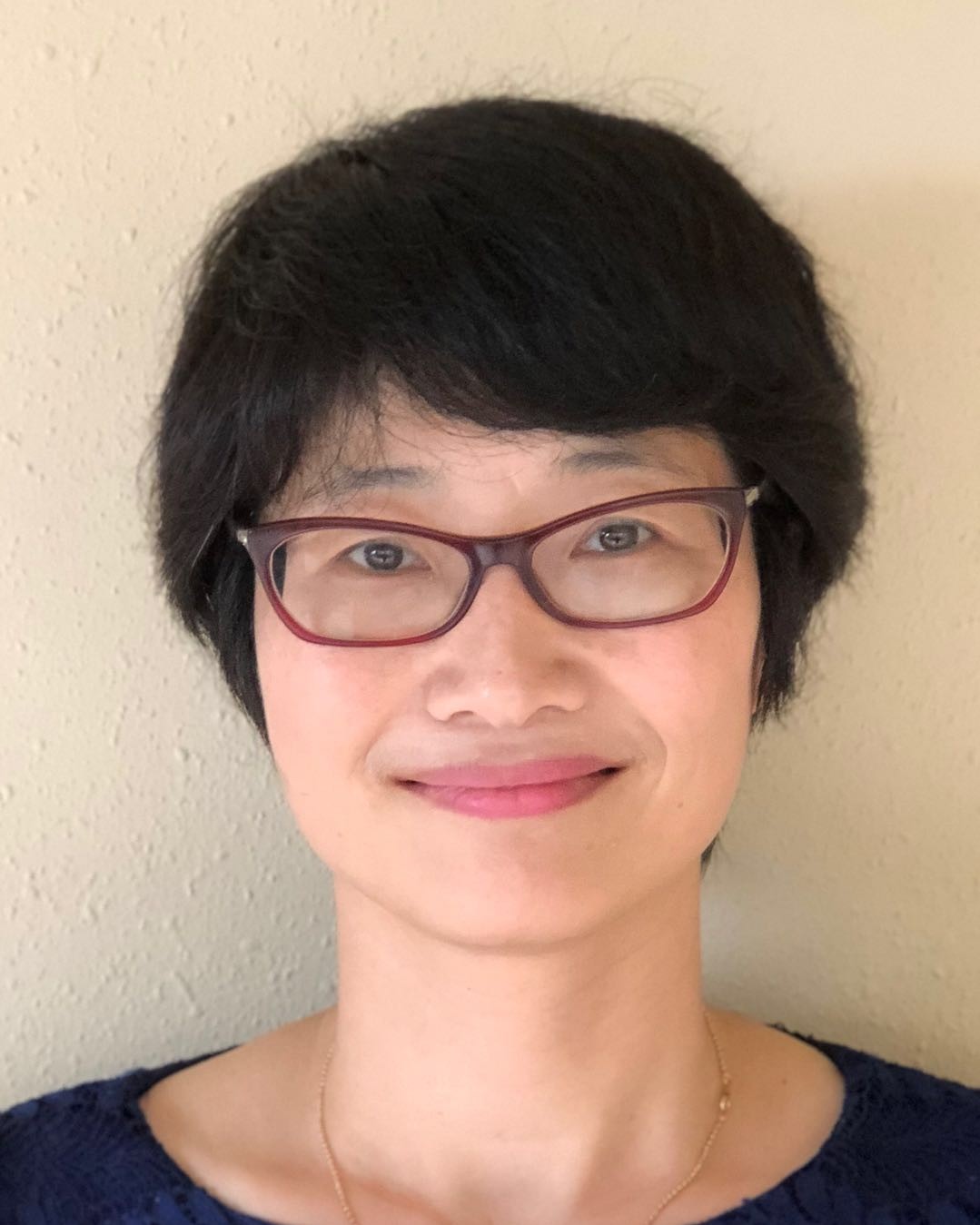}}]{Ying Tan}
(Fellow, IEEE) received the bachelor’s degree in electrical engineering from Tianjin University, Tianjin, China, in 1995, and the Ph.D. degree in electrical engineering from the National University of Singapore, Singapore, in 2002.
Following a postdoctoral fellowship with McMaster University, she joined the University of Melbourne, Melbourne, Australia, in 2004. 
She is a Professor of Mechanical Engineering at the University of Melbourne. 
Her research interests include intelligent systems, nonlinear systems, data-driven optimization, rehabilitation robotics, human motor learning, wearable sensors, and model-guided machine learning.
Dr. Tan has received numerous prestigious recognitions, including an Australian Postdoctoral Fellowship (2006–2008) and an ARC Future Fellowship (2009–2013). 
She currently serves on the ARC College of Experts (2024–2026) and holds several distinguished titles, including Fellow of Engineers Australia (FIEAust), and Fellow of the Asia-Pacific Artificial Intelligence Association. 
She is also a Member of the IEEE Fellow Committee (2024–2025) and was recently elected Fellow of the Australian Academy of Technology and Engineering (ATSE, 2025), an honor recognizing her outstanding contributions to engineering and applied science in Australia.
\end{IEEEbiography}

\begin{IEEEbiography}[{\includegraphics[width=1in,height=1.25in,clip,keepaspectratio]{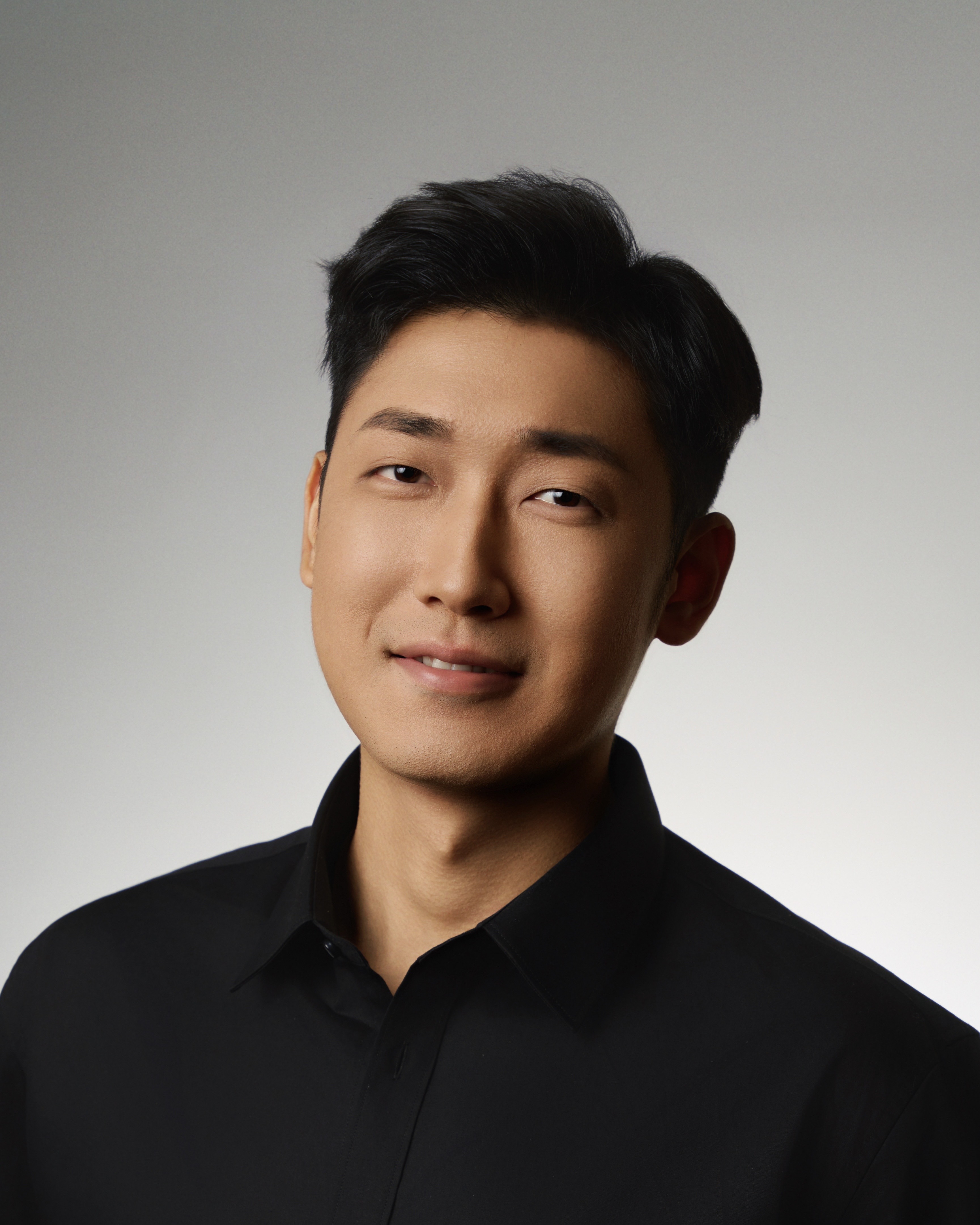}}]{Jinghe Yang}
received the Ph.D. degree in electrical and electronic engineering from the University of Melbourne, Melbourne, VIC, Australia, in 2026.
His research interests include computer vision and underwater robotics, with a particular focus on vision-based perception and navigation for autonomous underwater vehicle systems in complex environments. 
\end{IEEEbiography}

\begin{IEEEbiography}[{\includegraphics[width=1in,height=1.25in,clip,keepaspectratio]{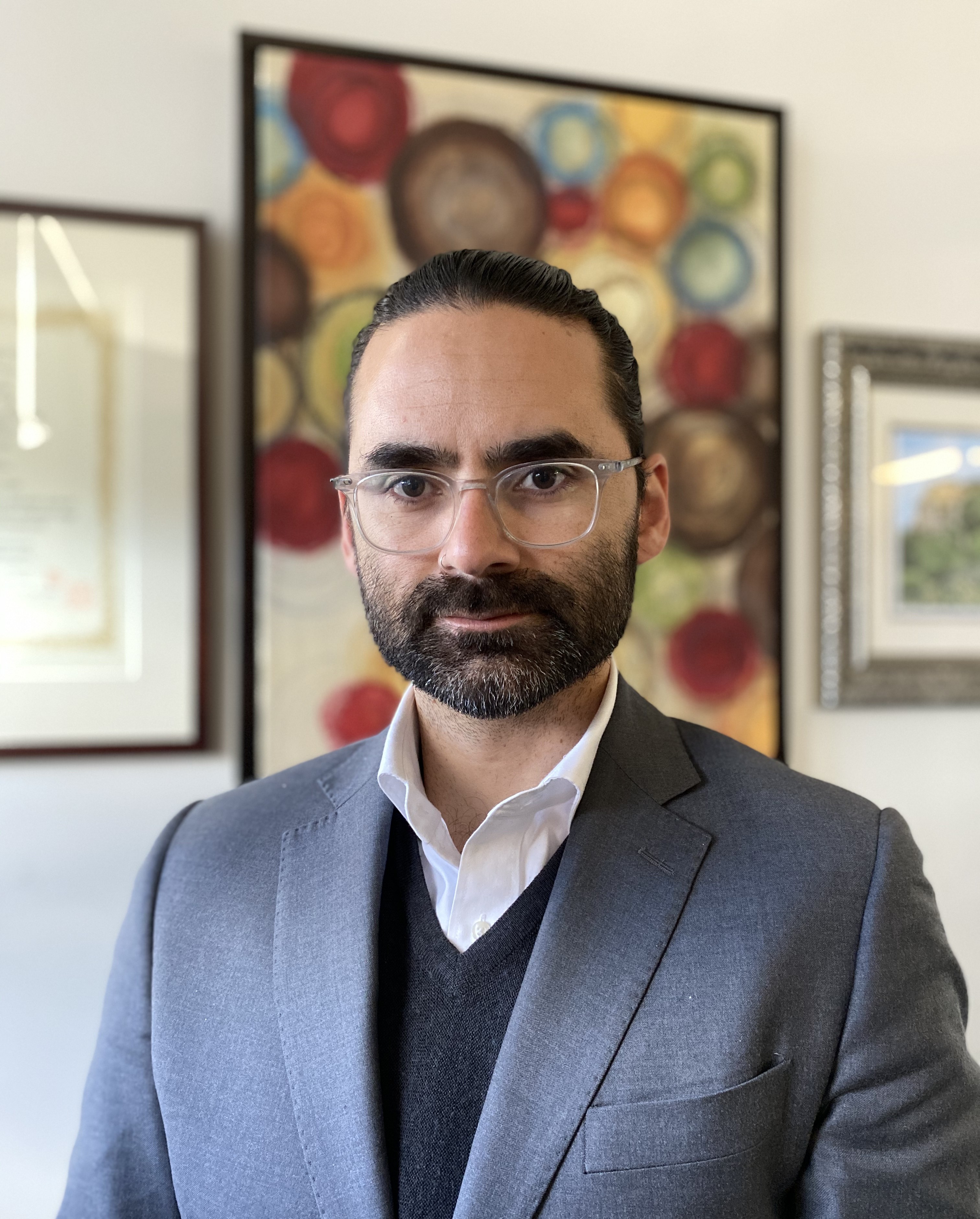}}]{Jason Monty}
is Head of School of Electrical, Mechanical and Infrastructure Engineering and a Professor with the Fluid Mechanics Group at The University of Melbourne. 
Dr Monty is a world-renowned experimental researcher in fluid mechanics who specialises in turbulent flows, including air-sea interactions, drag reduction, ship performance and ice-wave interactions. 
More recently he has turned his attention to indoor air quality, including policy and demonstration work. 
He has published over 150 peer-reviewed articles with 10,000 citations and an H-index of 49 (google scholar). 
Dr Monty is also director of the Michell Hydrodynamics Laboratory, where he has built a number of world-class experimental facilities including the 60 m long Extreme Air-Sea Interaction (EASI) facility that can produce hurricane wind-speeds. 
His research group has 1 Postdoctoral Research Fellow, 10 PhD students and 4 Research Engineers. 
\end{IEEEbiography}

\begin{IEEEbiography}[{\includegraphics[width=1in,height=1.25in,clip,keepaspectratio]{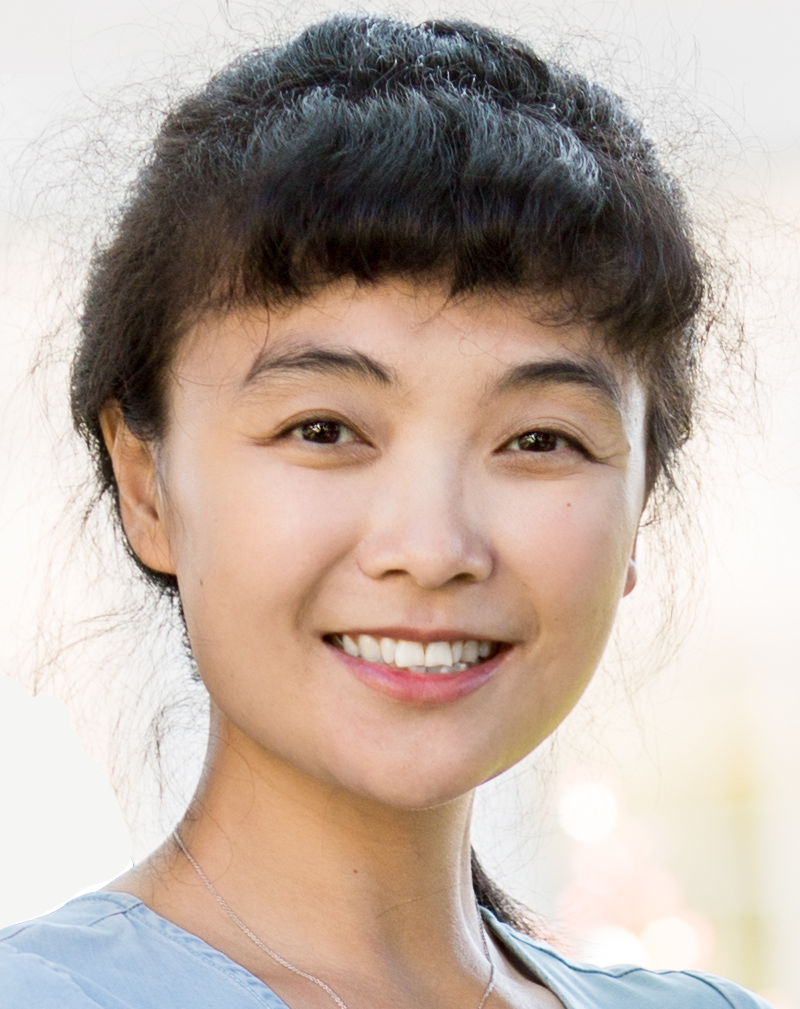}}]{Ye Pu}
received the B.S. degree from Shanghai Jiao Tong University, China, in 2008, the M.S. degree from the Technical University of Berlin, Germany, in 2011, and the Ph.D. degree from the Swiss Federal Institute of Technology Lausanne (EPFL), Switzerland, in 2016. 
From 2016 to 2018, she was a postdoctoral researcher with the University of California, Berkeley, USA. 
She is currently a Senior Lecturer with the Department of Electrical and Electronic Engineering at the University of Melbourne, Australia. 
She was awarded the Swiss National Science Foundation Early Postdoc Mobility Fellowship from 2016 to 2018 and is the recipient of the Australian Research Council (ARC) Discovery Early Career Researcher Award (DECRA) for 2022–2025.
\end{IEEEbiography}

\end{document}